\def \defmargin{1in}
\def \arxiv{1}
\providecommand{\todotoggle}{1} 
\providecommand{\arxiv}{0}
\providecommand{\defmargin}{1in} 
\newcommand{\mytodo}[1]{\ifnum\todotoggle=1{#1}\fi}
\newcommandx{\unsure}[2][1=]{\mytodo{\todo[linecolor=blue,backgroundcolor=blue!25,bordercolor=blue,#1]{#2}}}
\newcommandx{\change}[2][1=]{\mytodo{\todo[linecolor=cyan,backgroundcolor=cyan!25,bordercolor=cyan,#1]{#2}}}
\newcommandx{\info}[2][1=]{\mytodo{\todo[linecolor=green,backgroundcolor=green!25,bordercolor=green,#1]{#2}}}
\newcommandx{\improve}[2][1=]{\mytodo{\todo[linecolor=red,backgroundcolor=red!25,bordercolor=red,#1]{#2}}}
\newcommandx{\thiswillnotshow}[2][1=]{\todo[disable,#1]{#2}}
\newcommand{\tableoftodos}{\ifnum\todotoggle=1 \listoftodos[Comments/To Do's] \fi}
\tikzstyle{box} = [shape=rectangle,draw=black,thick]
\newcolumntype{L}[1]{>{\raggedright\let\newline\\\arraybackslash\hspace{0pt}}m{#1}}
\newcolumntype{C}[1]{>{\centering\let\newline\\\arraybackslash\hspace{0pt}}m{#1}}
\newcolumntype{R}[1]{>{\raggedleft\let\newline\\\arraybackslash\hspace{0pt}}m{#1}}
\newcommand{\myignore}[1]{}
\newcommand{\TealBlue}[1]{\textcolor{TealBlue}{#1}}
\newcommand{\Peach}[1]{\textcolor{Peach}{#1}}
\newcommand{\Cyan}[1]{\textcolor{cyan}{#1}}
\newcommand{\Red}[1]{\textcolor{red}{#1}}
\newcommand{\Navy}[1]{\textcolor{Blue}{#1}}
\newcommand{\Blue}[1]{\textcolor{Blue}{#1}}
\newcommand{\Green}[1]{\textcolor{OliveGreen}{#1}}
\newcommand{\Black}[1]{\textcolor{black}{#1}}
\newcommand{\White}[1]{\textcolor{white}{#1}}
\newcommand{\Code}[1]{\Cyan{\texttt{#1}}}
\newcommand{\Gray}[1]{\textcolor{gray}{#1}}
\newcommand{\Magenta}[1]{\textcolor{magenta}{#1}}
\newcommand{\tTealBlue}[1]{\ifnum\todotoggle=1\TealBlue{#1}\else{#1}\fi}
\newcommand{\tPeach}[1]{\ifnum\todotoggle=1\Peach{#1}\else{#1}\fi}
\newcommand{\tCyan}[1]{\ifnum\todotoggle=1\Cyan{#1}\else{#1}\fi}
\newcommand{\tRed}[1]{\ifnum\todotoggle=1\Red{#1}\else{#1}\fi}
\newcommand{\tNavy}[1]{\ifnum\todotoggle=1\Navy{#1}\else{#1}\fi}
\newcommand{\tBlue}[1]{\ifnum\todotoggle=1\Blue{#1}\else{#1}\fi}
\newcommand{\tGreen}[1]{\ifnum\todotoggle=1\Green{#1}\else{#1}\fi}
\newcommand{\tBlack}[1]{\ifnum\todotoggle=1\Black{#1}\else{#1}\fi}
\newcommand{\tWhite}[1]{\ifnum\todotoggle=1\White{#1}\else{#1}\fi}
\newcommand{\tCode}[1]{\ifnum\todotoggle=1\Code{#1}\else{#1}\fi}
\newcommand{\tGray}[1]{\ifnum\todotoggle=1\Gray{#1}\else{#1}\fi}
\newcommand{\tMagenta}[1]{\ifnum\todotoggle=1\Magenta{#1}\else{#1}\fi}
\newcommand{\inbrace}[1]{\left \{ #1 \right \}}
\newcommand{\inparen}[1]{\left ( #1 \right )}
\newcommand{\inangle}[1]{\left \langle #1 \right \rangle}
\newcommand{\infork}[1]{\left \{ \begin{matrix} #1 \end{matrix} \right .}
\newcommand{\inabs}[1]{\begin{vmatrix} #1 \end{vmatrix}}
\newcommand{\norm}[2]{\begin{Vmatrix} #2 \end{Vmatrix}_{#1}}
\newcommand{\ceil}[1]{\left \lceil #1 \right \rceil}
\newcommand{\set}[1]{\inbrace{#1}}
\newcommand{\setdef}[2]{\set{#1 : #2}}
\newcommand{\defeq}{\stackrel{\mathrm{def}}{=}}
\newcommand{\indicator}{\mathbf{1}}
\newcommand{\half}{\nicefrac{1}{2}}
\newcommand{\bit}{\set{0,1}}
\newcommand{\wtilde}[1]{\widetilde{#1}}
\newcommand{\what}[1]{\widehat{#1}}
\newcommand{\eps}{\varepsilon}
\newcommand{\bbE}{{\mathbb E}}
\newcommand{\bbR}{{\mathbb R}}
\newcommand{\ba}{\mathbf{a}}
\newcommand{\bb}{\mathbf{b}}
\newcommand{\bx}{\mathbf{x}}
\newcommand{\by}{\mathbf{y}}
\newcommand{\calB}{\mathcal{B}}
\newcommand{\calC}{\mathcal{C}}
\newcommand{\calF}{\mathcal{F}}
\newcommand{\poly}{\mathrm{poly}}
\newcommand{\supp}{\mathrm{supp}}
\newcommand{\sign}{\mathrm{sign}}
\newtheorem{theorem}{Theorem}
\newtheorem{defn}[theorem]{Definition}
\newtheorem{lem}[theorem]{Lemma}
\newtheorem{proposition}[theorem]{Proposition}
\newtheorem{remark}[theorem]{Remark}
\newtheorem{open}{Open Problem}
\newenvironment{proof-overview}{\noindent {\em Proof Overview.} \hspace*{1mm}}{\hfill $\Box$}
\newenvironment{proofof}[1]{\noindent {\em Proof of #1.} \hspace*{1mm}}{\hfill $\Box$}
\newenvironment{proof-sketch}{\noindent {\em Sketch of Proof.} \hspace*{1mm}}{\hfill $\Box$}
\newenvironment{proof-idea}{\noindent{\em Proof Idea.}\hspace*{1mm}}{\qed\bigskip}
\newenvironment{proof-attempt}{\noindent{\em Proof Attempt.}\hspace*{1mm}}{\qed\bigskip}
\newcommand{\singcol}[2]{\ifnum\arxiv=0 {#1} \else {#2} \fi}
\begin{document}

\title{Improved Bounds for Universal\\ One-Bit Compressive Sensing}

\author{
	Jayadev Acharya\thanks{JA is supported by a start-up grant from Cornell University. Email: \texttt{acharya@cornell.edu}.} \\ Cornell University
	\and
	Arnab Bhattacharyya\thanks{AB is supported in part by a DST Ramanujan Fellowship. Part of the research was conducted while visiting the Simons Institute for Theory of Computing. Email: \texttt{arnabb@csa.iisc.ernet.in}.} \\ Indian Institute of Science
	\and
	Pritish Kamath\thanks{PK is supported in part by NSF grants CCF-1420956, CCF-1420692, CCF-1218547, CCF-1650733. Email: \texttt{pritish@mit.edu}.}\\ MIT}

\maketitle

\begin{abstract}

Unlike compressive sensing where the measurement outputs are assumed to be real-valued and have infinite precision, in {\em one-bit compressive sensing}, measurements are quantized to one bit, their signs. In this work, we show how to recover the support of sparse high-dimensional vectors in the one-bit compressive sensing framework with an asymptotically near-optimal number of measurements. We also improve the bounds on the number of measurements for approximately recovering vectors from one-bit compressive sensing measurements. Our results are universal, namely the same measurement scheme works simultaneously for all sparse vectors.  

Our proof of optimality for support recovery is obtained by showing an equivalence between the task of support recovery using 1-bit compressive sensing and a well-studied combinatorial object known as Union Free Families.
\end{abstract}

\section{Introduction} \label{sec:intro}

The problem of recovering a sparse signal from a small number of measurements is a fundamental one in machine learning, statistics, and signal processing. When the measurements are linear, the process is called {\em compressive sensing}. Remarkable results from the last decade \cite{Don06, candes2006robust} have shown that it is possible to efficiently reconstruct sparse signals using only $\Theta(k \log(n/k))$ linear measurements. Here, $n$ is the ambient dimension of the input signal and $k$ is its sparsity. A particularly striking result in compressive sensing is that with high probability, a Gaussian matrix with $\Theta(k \log(n/k))$ rows can be used as the sensing matrix for all sparse inputs simultaneously and is in that sense universal.

A criticism of compressive sensing is that it assumes infinite-precision real-valued measurements. Quantization of measurement outputs to very low bit-rates cannot be modeled simply as additive noise with bounded norm. To address this issue, Boufounos and Baraniuk \cite{BB08} introduced the notion of {\em $1$-bit compressive sensing} where each measurement is quantized to a single bit, namely its sign. This quantization can be cheaply implemented in hardware and is robust to certain non-linear distortions \cite{Bou10}. One-bit compressive sensing is an active area of research (\textit{e.g.}, \cite{GNR10, yan2012robust, PlanV13, JLBB_Robust1bitCS, GJNN_1bitCS, zhang2014efficient, knudson2016one, dai2016noisy, liu2016one, shen2016one}).

Formally, in $1$-bit compressive sensing, given a sensing matrix $A$, measurements of a $k$-sparse\footnote{A vector is $k$-sparse if it has at most $k$ nonzero components.} signal $\bx \in \bbR^n$ are obtained by:
$$\by = \text{sign}(A \bx)$$
so that $\by$ is the vector of signs\footnote{To be precise, let $\sign(x) = x/|x|$ for nonzero $x$ and $\sign(0)=0$. Note that this seems to be returning more than $1$ bit. But observe that if we instead define $\overline{\sign}(x) = \infork{1 & x > 0 \\ -1 & x \le 0}$, then a measurement of $\sign(\inangle{\ba, \bx})$, can be simulated with two $\overline{\sign}$ measurements, namely using $\overline{\sign}(\inangle{\ba, \bx})$ and $\overline{\sign}(\inangle{-\ba, \bx})$.} of the coordinates of $A\bx$. We consider noiseless measurements. Note that all information about the magnitude of $\bx$ is lost by the sign operator, and we can only hope to reconstruct the normalized vector $\bx/\norm{2}{\bx}$ from $\by$. 

In this work, we primarily consider the problem of {\em support recovery} of sparse vectors using $1$-bit compressive sensing measurements. We focus on {\em universal} sensing matrices. This is commonly referred to as \emph{for all}, or as \emph{uniform bounds}. Universal sensing matrices have guarantees of the form, ``with high probability, for all signals, the algorithm succeeds'', which is in contrast to the general randomized setting where guarantees are slightly weaker, ``for each signal, with high probability, the algorithm succeeds''~\cite{gilbert2007one}. Our objective is to minimize the total number of measurements needed (i.e., the number of rows in the sensing matrix) and the running time of the recovery algorithm. Formally:
\begin{defn}[Support Recovery with 1-bit Compressed Sensing]
	A matrix $A \in \bbR^{m \times n}$ is a {\em 1-bit compressive sensing matrix} for support recovery of $k$-sparse vectors if there exists a recovery algorithm such that, for all $\bx \in \bbR^n$ satisfying $\norm{0}{\bx} \le k$, the algorithm on input $A\bx$ returns $\supp(\bx)$. 
\end{defn}

We will also consider the problem of approximate vector recovery using 1-bit compressive sensing, again focusing on the universality. Formally:
\begin{defn}[Approximate Vector Recovery with 1-bit Compressed Sensing]
	A matrix $A \in \bbR^{m \times n}$ is a {\em 1-bit compressive sensing matrix} for $\eps$-approximate vector recovery of $k$-sparse vectors if there exists a recovery algorithm such that, for all $\bx \in \bbR^n$ satisfying $\norm{0}{\bx} \le k$, the algorithm on input $A\bx$ returns $\hat{\bx}$ such that $\inabs{\frac{\bx}{\norm{2}{\bx}}-\frac{\hat{\bx}}{\norm{2}{\hat{\bx}}}}<\eps$. 
\end{defn}

\begin{table*}[t]
	\renewcommand*{\arraystretch}{1.3}
	\begin{center}
		\begin{tabular}{|c|c|c|c|}
			\hline
			Problem & Upper Bound & Lower Bound & Citation\\
			\hline
			\hline
			\multirow{3}{*}{\shortstack{Support Recovery for \\ $k$ sparse vectors in $\bbR^n$}}  & $O(k^3\log n)$ & -- & \cite{GJNN_1bitCS}  \\
			& -- & $\Omega(k \log \frac n k)$ & folklore\\
			& $O(k^2\log n)$ & $\Omega(k^2\log n / \log k)$ & This work \\
			\hline
			\multirow{4}{*}{\shortstack{Approximate Recovery for \\ $k$ sparse vectors in $\bbR^n$}} & $\wtilde{O}\inparen{\frac{k}{\eps}\log\frac nk}$ & -- & \cite{JLBB_Robust1bitCS, GJNN_1bitCS} \\
			&$\wtilde{O}\inparen{k^3\log \frac nk+\frac k\eps}$ & -- & \cite{GJNN_1bitCS} \\ 
			& -- & $\Omega\inparen{k\log \frac nk+\frac k\eps - k^{1.5}}$ & \cite{JLBB_Robust1bitCS}\\ 
			&$\wtilde{O}\inparen{k^2\log \frac nk+\frac k\eps}$ & $\Omega\inparen{k\log \frac nk + \frac k\eps}$ & This work \\ 
			\hline
		\end{tabular}
		\vspace{2mm}
		\caption{Summary of results on universal 1-bit compressive sensing}
		\label{table:summary}
	\end{center}
\end{table*}

\subsection{Our Results}

Our main contribution is to show nearly tight upper and lower bounds on the number of measurements needed for support recovery of $k$-sparse signals using 1-bit compressive sensing. We also provide some improvements on the bounds in approximate vector recovery. See Table~\ref{table:summary} for a summary of our results.\footnote{the notation $\wtilde{O}(f(n))$ stands for $O(f(n) \cdot \poly\log(f(n)))$ for any $\poly\log(f(n))$.}

\subsubsection{Support Recovery}

Previously, Gopi et al~\cite{GJNN_1bitCS} have shown a universal support recovery algorithm using $O(k^3 \log n)$ $1$-bit measurements with $O(nk \log n)$ running time. If universality is not a constraint, then~\cite{HB11, GNR10} show that $O(k \log n)$ measurements suffice.

Our main contribution is showing that $\Theta(k^2 \log n)$ is a nearly tight bound for the number of $1$-bit measurements needed for universal support recovery. Like in \cite{GJNN_1bitCS}, our arguments exploit the structure of {\em Union Free Set Families}~\cite{EFF82}. While~\cite{GJNN_1bitCS} uses Union Free Families to recover non-negative sparse vectors, we observe that a strengthened version of these set families can in fact be used to recover all sparse vectors.  Moreover, we prove that any $1$-bit compressive sensing matrix for support recovery can be converted into a Union Free Family, thus deepening the connection between the two notions. Formally, we obtain the following upper and lower bounds:

\begin{restatable}{theorem}{upperbound}\label{thm:upperbound} {\em (Upper bound for Support Recovery)}
There exists a $1$-bit compressive sensing matrix $A \in \bbR^{m \times n}$ for support recovery of $k$-sparse signals that uses $m = O(k^2 \log n)$ measurements. Moreover, the recovery algorithm runs in time $O(n k \log n)$.
\end{restatable}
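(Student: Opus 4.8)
The plan is to build the sensing matrix from a suitable combinatorial design — a strengthened form of a Union Free Family (UFF) — and to show that its incidence structure, read off the signs of $A\bx$, already pins down $\supp(\bx)$ regardless of the real values on that support. Recall that a family $\calF = \{S_1, \dots, S_n\}$ of subsets of $[m]$ is $k$-UFF if no set in $\calF$ is contained in the union of any $k$ others; for signed recovery I expect we need a two-sided version, where for any two disjoint index sets $P, Q \subseteq [n]$ with $|P| + |Q| \le k$ there is a coordinate $j \in [m]$ that ``isolates'' one column of $P \cup Q$ from all the others in $P \cup Q$, i.e.\ $j$ belongs to exactly one $S_i$ with $i \in P \cup Q$. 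Standard probabilistic arguments (pick each $S_i$ by including each of the $m$ coordinates independently with probability $\approx 1/k$) show such a family exists with $m = O(k^2 \log n)$; this matches the claimed bound.

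First I would set up the sensing matrix: let the support of column $i$ of $A$ be $S_i$, and on those coordinates put \emph{generic} nonzero reals — e.g.\ i.i.d.\ continuous random entries — so that with probability $1$ no accidental cancellation occurs among any $\le k$ columns on any row. (One must argue that finitely many ``bad'' linear dependencies, one per choice of support pattern and sign pattern, each have measure zero, so a good assignment exists; this is where genericity enters.) Second, I would describe the recovery algorithm and its analysis. Given $\by = \sign(A\bx)$ with $\supp(\bx) = T$, $|T| \le k$: a row $j$ with $y_j \ne 0$ certifies that $S^{(j)} := \{i : A_{ji} \ne 0\}$ meets $T$, while a row $j$ with $y_j = 0$ — combined with genericity — should force $S^{(j)} \cap T = \emptyset$ (if $S^{(j)}\cap T$ were nonempty, a generic combination is nonzero). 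So the algorithm computes $U = \bigcup_{j : y_j \ne 0} S^{(j)}$ and $V = \bigcup_{j : y_j = 0} S^{(j)}$, outputs $[n] \setminus V$ as the support, or more robustly uses the isolating-coordinate property directly: for each candidate $i$, test whether some row isolates $i$ within the surviving set. The two-sided UFF property is exactly what guarantees every $i \in T$ gets an isolating row and no $i \notin T$ is wrongly retained. Third, I would bound the running time: the matrix has $O(k^2 \log n)$ rows, each column has $O(k \log n)$ nonzeros, and the natural implementation scans, for each of the $n$ columns, the rows in its support and checks the sign vector, giving $O(n \cdot k \log n)$ as claimed.

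The main obstacle I anticipate is getting the ``$y_j = 0 \Rightarrow S^{(j)} \cap T = \emptyset$'' step and the isolation step to interact cleanly: a zero measurement only tells you a \emph{particular} generic linear combination vanished, and one must rule out that this happens for the actual $\bx$ at hand. The fix is to quantify correctly — genericity of $A$ must be established \emph{before} and independently of $\bx$, so that the measure-zero bad set is a property of $A$ alone across all $(T, \text{sign pattern})$ pairs, of which there are only $\binom{n}{\le k} 2^k = n^{O(k)}$, a finite union of null sets. A secondary subtlety is pinning down the precise strengthening of UFF needed (``cover-free'' vs.\ the isolating condition above) and verifying the probabilistic existence argument gives the stated $O(k^2 \log n)$ with the right constant in the exponent of the failure probability; this is routine but must be done carefully since the lower bound in the companion theorem shows the exponent cannot be improved by more than a $\log k$ factor. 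Everything else — the algorithm and its correctness given the combinatorial guarantee — should follow directly.
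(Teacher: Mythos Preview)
Your high-level plan matches the paper's --- build $A$ from a UFF-type design and read off the support from the zero/nonzero pattern of $\sign(A\bx)$ --- but the correctness argument has a genuine gap that your anticipated fix does not close. The inference ``$y_j = 0 \Rightarrow S^{(j)} \cap T = \emptyset$'' is false in the \emph{universal} setting regardless of how generic $A$ is: once $A$ is fixed, the adversary chooses $\bx$, and for any row $j$ with $|S^{(j)} \cap T| \ge 2$ she can place $(x_i)_{i\in T}$ on the hyperplane $\sum_{i\in S^{(j)}\cap T} A_{j,i}\,x_i = 0$, forcing $y_j = 0$. This is not a measure-zero event in $A$-space; it holds for \emph{every} $A$. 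So your union bound over $(T,\text{sign pattern})$ pairs is beside the point --- the problem is real-valued $\bx$, not finitely many sign patterns --- and the ``output $[n]\setminus V$'' rule can wrongly discard indices of $T$. The vaguer ``isolating coordinate'' alternative you mention does not escape this either, since the algorithm does not know $T$ and a single isolating row per element is not something it can detect.

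The paper's repair is a quantitative strengthening that eliminates the need for genericity altogether. A \emph{Robust} UFF with parameter $\alpha=\tfrac12$ requires $|B_{j_0}\cap(B_{j_1}\cup\cdots\cup B_{j_k})| < \tfrac12|B_{j_0}|$ for every choice of distinct $j_0,\dots,j_k$; such families exist with $m=O(k^2\log n)$ and $|B_j|=d=O(k\log n)$. Take $A$ to be the $\{0,1\}$ incidence matrix. For $j\in T$, more than $d/2$ rows of $B_j$ miss $\bigcup_{i\in T,\,i\ne j}B_i$, and on each such row the measurement equals exactly $x_j\ne 0$ --- guaranteed nonzero with no appeal to genericity. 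For $j\notin T$, fewer than $d/2$ rows of $B_j$ meet $\bigcup_{i\in T}B_i$ at all, so fewer than $d/2$ measurements in $B_j$ can possibly be nonzero. The algorithm simply thresholds: include $j$ iff more than $d/2$ of the measurements indexed by $B_j$ are nonzero. This sidesteps cancellation by \emph{counting} guaranteed-nonzero isolating rows rather than trying to infer emptiness from individual zeros; your running-time analysis then carries over unchanged.
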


\vspace*{-0.5cm}

\begin{restatable}{theorem}{lowerbound}\label{thm:lowerbound} {\em (Lower bound for Support Recovery)}
	Let $A \in \bbR^{m \times n}$ be such that the map $\psi_A : \bbR^n \to \bit^m$, given by $\psi_A(\bx) \defeq \sign(A\bx)$ satisfies $\psi_A(\bx_1) \ne \psi_A(\bx_2)$ whenever $\norm{0}{\bx_1}, \norm{0}{\bx_2} \le k$ and $\supp(\bx_1) \ne \supp(\bx_2)$. Then, $m = \Omega(k^2 \log n / \log k)$.
\end{restatable}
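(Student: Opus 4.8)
The plan is to convert $A$ into a \emph{cover-free set family} and then apply the Erd\H{o}s--Frankl--F\"uredi bound~\cite{EFF82} on the maximum size of such families. For each $j \in [n]$, let $S_j \defeq \set{i \in [m] : A_{ij} \ne 0}$ denote the support of the $j$-th column of $A$. The key step I would carry out is to show that the family $\set{S_j}_{j \in [n]}$ is $(k-1)$-cover-free, i.e.\ that no $S_{j_0}$ is contained in a union of $k-1$ of the other $S_j$'s. (Throughout one may assume $k \ge 2$ and $n \ge k$, the other cases being vacuous or trivial.) Granting this, the theorem follows: the Erd\H{o}s--Frankl--F\"uredi theorem says that an $r$-cover-free family of $N$ subsets of a ground set of size $m$ must satisfy $m = \Omega\inparen{r^2 \log N / \log r}$, and taking $r = k-1$ and $N = n$ gives exactly $m = \Omega\inparen{k^2 \log n / \log k}$.

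The heart of the argument is the following claim: if $S_{j_0} \subseteq S_{j_1} \cup \cdots \cup S_{j_{k-1}}$ for distinct indices $j_0,\dots,j_{k-1}$, then $A$ fails the separation hypothesis. To prove it, I would exhibit two $k$-sparse vectors with distinct supports that $\psi_A$ fails to separate. First pick scalars $c_1,\dots,c_{k-1}$, all nonzero, such that the vector $\bx_2 \defeq \sum_{\ell=1}^{k-1} c_\ell \be_{j_\ell}$ has $(A\bx_2)_i \ne 0$ for every $i \in S_{j_0}$. Such a choice exists because for each $i \in S_{j_0} \subseteq S_{j_1} \cup \cdots \cup S_{j_{k-1}}$ the linear functional $(c_1,\dots,c_{k-1}) \mapsto \sum_{\ell=1}^{k-1} c_\ell A_{ij_\ell}$ is not identically zero (since $i \in S_{j_\ell}$ for some $\ell$), so its zero set is a proper linear subspace of $\bbR^{k-1}$; finitely many such subspaces together with the $k-1$ coordinate hyperplanes cannot cover $\bbR^{k-1}$. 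Now set $\bx_1 \defeq \bx_2 + \eps\,\be_{j_0}$ for a small $\eps > 0$, say $0 < \eps < \min_{i \in S_{j_0}} \inabs{(A\bx_2)_i}/\inabs{A_{ij_0}}$ (with the convention that this minimum is $+\infty$ when $S_{j_0} = \emptyset$). Then $\supp(\bx_1) = \set{j_0,\dots,j_{k-1}}$ and $\supp(\bx_2) = \set{j_1,\dots,j_{k-1}}$ are distinct sets of sizes $k$ and $k-1$, while $\sign(A\bx_1) = \sign(A\bx_2)$: for rows $i$ with $A_{ij_0} = 0$ the vectors $A\bx_1$ and $A\bx_2$ coincide, and for rows $i \in S_{j_0}$ the added term $\eps A_{ij_0}$ has absolute value strictly less than $\inabs{(A\bx_2)_i}$ and so cannot change a nonzero sign. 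This contradicts the hypothesis, proving the claim; in particular the $S_j$ are pairwise distinct, so $\set{S_j}$ is a genuine $(k-1)$-cover-free family.

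If the theorem is read with the two-valued convention $\overline{\sign}(t) = \indicator\set{t>0}$ in place of the three-valued $\sign$ of the footnote, nothing changes: the same matrix $A$ separates under $\sign$ as well (each coordinate of $\overline{\sign}(A\bx)$ is a fixed function of the corresponding coordinate of $\sign(A\bx)$), so the argument above applies to it with no loss in the number of rows.

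The step I expect to be the main obstacle is this reduction — extracting a bona fide cover-free family from the weak separation hypothesis. Two points need care. First, $\sign(A\bx_2)$ is \emph{not} literally the union $S_{j_1} \cup \cdots \cup S_{j_{k-1}}$, because cancellations in $A\bx_2$ can shrink its support; the genericity argument is there precisely to preclude such cancellations on the rows that matter, namely those in $S_{j_0}$. Second, one must check that the infinitesimal perturbation in direction $\be_{j_0}$ flips no sign, and this is exactly where it matters that the \emph{full} column support $S_{j_0}$ (positive and negative entries alike) is covered — which is what lets us work with plain column supports rather than a signed set system, and is the one place where the three-valued $\sign$, with its insistence on agreement at zero coordinates, is essential. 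Finally, the $\log k$ gap between this lower bound and the $O(k^2\log n)$ upper bound of Theorem~\ref{thm:upperbound} is inherited directly from the known $\Theta(\log r)$ gap in the extremal theory of $r$-cover-free families, which is why one should not expect this approach to close it.
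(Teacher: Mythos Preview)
Your proposal is correct and follows essentially the same route as the paper: define $B_j = \setdef{i}{A_{ij}\ne 0}$, show by a perturbation argument that $\set{B_j}$ is an $(n,m,k-1)$-UFF, and then invoke F\"uredi's bound (the paper's Lemma~\ref{lem:furedi}). The only cosmetic differences are that the paper first normalizes $A$ to have entries in $[-1,1]$ and then scales the generic $(k-1)$-sparse vector so that \emph{all} coordinates of $A\bx$ in $B_{j_1}\cup\cdots\cup B_{j_{k-1}}$ are $\eps$-bounded away from zero, whereas you leave $A$ alone, choose $\eps$ row-by-row, and only insist on nonvanishing at the rows in $S_{j_0}$ --- your version is marginally tighter but the idea is identical.
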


\paragraph{Comparison to Group Testing.}
We remark that quantitatively similar results were known previously in the context of Group Testing \cite{Dorfman:43}, which in the language of $1$-bit compressive sensing, corresponds to the setting where the $k$-sparse signals have entries in $\bit$, and the measurements are restricted to be non-negative. Indeed, these results are obtained by showing a tight connection between Group Testing and Union-Free Families (also known as $k$-disjunct families). Group Testing has been an active research topic with a vast literature (See for eg, \cite{BBTK:96, DH:00, HD:00, AtiaS12, CaiJBJ13, chan2014non, Mazumdar16} and references therein).

Our contributions are as follows: (i) In Theorem~\ref{thm:upperbound}, we use a strengthened notion of Union-Free Families, to obtain a better upper bound for support recovery of arbitrary $k$-sparse signals in $\bbR^n$; while surprisingly still using measurements vectors with entries in $\bit$. (ii) In Theorem~\ref{thm:lowerbound}, the lower bound we obtain is incomparable to the lower bound in the Group Testing problem. Our lower bound is stronger in the sense that it applies even when the measurements are arbitrary real vectors (instead of just non-negative), whereas it is weaker in the sense that the lower bound applies to measurements that can recovery the support for all $k$-sparse signals in $\bbR^n$ (instead of only $0$-$1$ signals).

\subsubsection{Approximate Vector Recovery}

A number of papers have obtained bounds for approximate vector recovery~\cite{PlanV13, JLBB_Robust1bitCS, GJNN_1bitCS}. The current universal 1-bit compressive sensing algorithms require \singcol{$\min\{\wtilde{O}(\frac{k}{\eps} \log \frac nk), \wtilde{O}(k^3\log
\frac nk + \frac k\eps)\}$}{\[ \min\set{\wtilde{O}\inparen{\frac{k}{\eps} \log \frac nk}, \wtilde{O}\inparen{k^3\log
\frac nk + \frac k\eps}} \]}measurements. \cite{JLBB_Robust1bitCS} also proved a lower bound of $\Omega(k\log \frac nk+ \frac k\eps - k^{3/2})$ measurements.\footnote{Strictly speaking, the lower bound of $\Omega(k \log \frac nk)$ is folklore (we provide a proof in Section~\ref{sec:approx} for completeness), and \cite{JLBB_Robust1bitCS} showed a lower bound of $\Omega(\frac{k}{\eps} - k^{1.5})$.} As a function of $\eps$, the second half of the bound is helpful only when $\eps<1/\sqrt k$.

As a corollary to Theorem~\ref{thm:upperbound}, we can improve the upper bound term of $\wtilde{O}(k^3\log
\frac nk + \frac k\eps)$ to $\wtilde{O}(k^2\log n + \frac k\eps)$. Moreover, in Section~\ref{sec:approx} we also improve the lower bound to $\Omega(k\log \frac nk+ \frac k\eps)$, which holds for all $\eps>0$.


\begin{restatable}{cor}{ubapprox}\label{cor:ub-approx}{\em (Improved Upper Bound for Approximate Recovery)}
	There exists a $1$-bit compressive sensing matrix $A \in \bbR^{m \times n}$ for $\eps$-approximate recovery of $k$-sparse signals that uses $m = \wtilde{O}\inparen{k^2 \log n + \frac{k}{\eps}}$ measurements.
\end{restatable}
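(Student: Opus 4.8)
The plan is to obtain Corollary~\ref{cor:ub-approx} by composing the support-recovery matrix of Theorem~\ref{thm:upperbound} with the ``approximate recovery once the support is known'' component that constitutes the ``$+\,\frac{k}{\eps}$'' part of the scheme of \cite{GJNN_1bitCS}. Concretely, I would take the sensing matrix to be the vertical concatenation $A = \insqmat{A_1 \\ A_2}$, where $A_1 \in \bbR^{m_1 \times n}$ with $m_1 = O(k^2 \log n)$ is the matrix supplied by Theorem~\ref{thm:upperbound}, and $A_2 \in \bbR^{m_2 \times n}$ with $m_2 = \wtilde{O}(k/\eps)$ is a matrix with the property that for every $S \subseteq [n]$ with $|S| \le k$ and every $\bz \in \bbR^n$ supported on $S$, the sign pattern $\sign(A_2\bz)$ together with the set $S$ determines $\bz/\norm{2}{\bz}$ up to $\ell_2$ error $\eps$. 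The recovery algorithm is then the obvious two-phase procedure: on input $\sign(A\bx) = \inparen{\sign(A_1\bx),\, \sign(A_2\bx)}$, first run the algorithm of Theorem~\ref{thm:upperbound} on $\sign(A_1\bx)$ to recover $S = \supp(\bx)$ exactly (in time $O(nk\log n)$), then feed $S$ and $\sign(A_2\bx)$ into the known-support routine to output $\hat{\bx}$ with $\inabs{\frac{\bx}{\norm{2}{\bx}} - \frac{\hat{\bx}}{\norm{2}{\hat{\bx}}}} < \eps$.

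Correctness is immediate from the guarantees of the two components: the first phase returns $S = \supp(\bx)$ by Theorem~\ref{thm:upperbound}, and since $\bx$ is itself supported on this $S$, the second phase applies verbatim and returns a valid $\hat{\bx}$. The measurement count is $m = m_1 + m_2 = O(k^2 \log n) + \wtilde{O}(k/\eps) = \wtilde{O}\inparen{k^2\log n + \frac{k}{\eps}}$, as claimed; note that the improvement over \cite{GJNN_1bitCS} comes entirely from replacing their $O(k^3\log n)$-row support-recovery stage by the $O(k^2\log n)$-row matrix of Theorem~\ref{thm:upperbound}, while the $\wtilde{O}(k/\eps)$ term is left untouched.

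Hence the only ingredient beyond Theorem~\ref{thm:upperbound} is the matrix $A_2$ and its known-support recovery routine, which I would import from \cite{GJNN_1bitCS}. Its content is that, once $S$ is fixed, the residual task is to recover the direction of a dense vector living in the $k$-dimensional coordinate subspace $\bbR^S$ --- equivalently, to partition the unit sphere of $\bbR^S$ into cells of diameter below $\eps$ --- for which $\wtilde{O}(k/\eps)$ hyperplane measurements suffice for any one fixed $S$. The step I expect to require the most care is precisely the uniformity: a \emph{single} matrix $A_2$ must serve all $\binom{n}{k}$ coordinate subspaces simultaneously, and its routine must use only the recovered set $S$ and not any auxiliary information that the heavier support-recovery stage of \cite{GJNN_1bitCS} might have supplied. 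If one prefers a self-contained argument in place of the citation, I would establish this by a union bound over the $\binom{n}{k}$ choices of $S$ against the tessellation-failure probability of a random $A_2$ and verify that the row count stays within the claimed $\wtilde{O}(k^2\log n + k/\eps)$ bound; everything else --- the concatenation, the sequential invocation of the two routines, and the accounting of measurements --- is routine bookkeeping.
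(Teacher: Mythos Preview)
Your proposal is correct and matches the paper's own proof essentially verbatim: the paper simply observes that the \cite{GJNN_1bitCS} scheme consists of a support-recovery stage followed by a $\wtilde{O}(k/\eps)$ Gaussian-measurement stage for approximate recovery in $k$ dimensions, and swaps in the $O(k^2\log n)$ support-recovery matrix of Theorem~\ref{thm:upperbound} for their $O(k^3\log n)$ one. Your discussion of the uniformity of $A_2$ is more explicit than the paper's, which dispatches the issue in a parenthetical (``this is still non-adaptive because standard Gaussian measurements suffice'').
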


\vspace*{-0.5cm}


\begin{restatable}{theorem}{lbapprox}\label{thm:lb-approx} {\em (Improved Lower Bound for Approximate Recovery)}
	The number of measurements for $\eps$-approximate recovery using 1-bit
	compressive sensing is at least $\Omega\inparen{k\log\frac nk+\frac k\eps}$.
\end{restatable}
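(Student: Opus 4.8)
The plan is to prove the two summands separately, since $\Omega(k\log\frac nk + \frac k\eps) = \Omega(\max\{k\log\frac nk,\ \frac k\eps\})$, and then keep whichever bound is larger; throughout I assume $\eps$ is below a sufficiently small absolute constant (for $\eps$ larger than, say, $2$ every pair of unit vectors is within $\eps$ and no measurements are needed). The single tool used in both parts is the following observation: if two $k$-sparse vectors $\bx_1,\bx_2$ with $\norm{2}{\bx_i}=1$ satisfy $\sign(A\bx_1)=\sign(A\bx_2)$, then the recovery algorithm, being a function of $\sign(A\bx)$ alone, returns the same $\hat\bx$ for both, which by the triangle inequality is impossible once $\norm{2}{\bx_1-\bx_2}>2\eps$.

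For the $\Omega(k\log\frac nk)$ term I would invoke a Gilbert--Varshamov greedy argument on constant-weight codes to get a family $S_1,\dots,S_N\subseteq[n]$ of $k$-subsets with $N=(n/k)^{\Omega(k)}$ and $|S_i\triangle S_j|\ge k/2$ for all $i\ne j$. Setting $\bx_i=\mathbf 1_{S_i}/\sqrt k$ gives unit vectors with $\norm{2}{\bx_i-\bx_j}=\sqrt{|S_i\triangle S_j|/k}\ge 1/\sqrt2>2\eps$, so by the observation the strings $\sign(A\bx_i)\in\{-1,0,1\}^m$ are pairwise distinct; hence $3^m\ge N$ and $m=\Omega(\log N)=\Omega(k\log\frac nk)$. (When $n=\Theta(k)$ this term is $O(k)$ and is dominated by the other term anyway.)

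For the $\Omega(\frac k\eps)$ term I would fix the support to $[k]$ and reduce to a statement about dense vectors in $\bbR^k$. Let $A'\in\bbR^{m\times k}$ be the submatrix of $A$ on those columns; its rows define at most $m$ hyperplanes through the origin in $\bbR^k$. On each \emph{open chamber} of this central arrangement the pattern $\sign(A'\bx)$ is constant, so all unit vectors of the chamber (viewed as $k$-sparse vectors) map under $A$ to the same measurement string and thus to the same $\hat\bx$; by the triangle inequality the chamber, intersected with $S^{k-1}$, has Euclidean diameter at most $2\eps$, hence normalized surface measure at most $(C\eps)^{k-1}$ by the standard spherical-cap estimate (valid since $\eps$ is below a constant). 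The open chambers cover $S^{k-1}$ except for the measure-zero union of the hyperplanes, so their number $N'$ satisfies $N'\cdot(C\eps)^{k-1}\ge 1$, i.e. $N'\ge(c/\eps)^{k-1}$. On the other hand the classical count for a central arrangement of $m$ hyperplanes in $\bbR^k$ gives $N'\le 2\sum_{i=0}^{k-1}\binom{m-1}{i}\le 2\bigl(e(m-1)/(k-1)\bigr)^{k-1}$ once $m\ge k$. Comparing the two and taking $(k-1)$-st roots yields $m=\Omega(k/\eps)$; the case $m<k$ is ruled out once $\eps$ is below a constant (it would force $2^m\ge(c/\eps)^{k-1}$, hence $m\ge k$).

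I expect the $\Omega(k/\eps)$ bound to be the real content, and the delicate point is extracting the right exponent. Counting \emph{all} sign patterns of $A'$ (faces of all dimensions, $\approx(Cm/k)^{k}$) against the packing bound $(c/\eps)^{k-1}$ only gives the weaker $m=\Omega(k/\sqrt\eps)$; one must count \emph{only} the open chambers, using that they already exhaust $S^{k-1}$ in measure, so that both sides carry the exponent $k-1$. This is also exactly where non-adaptivity enters: $m$ fixed hyperplanes carve $\bbR^k$ into only $\poly(m)^{\,k-1}$ cells rather than the $2^m$ one would get from adaptive halfspace queries, and the $1/(k-1)$ factor inside the chamber count is precisely what upgrades $\Omega(1/\eps)$ to $\Omega(k/\eps)$. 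The remaining ingredients — the Gilbert--Varshamov count, the spherical-cap volume estimate, and the degenerate regimes ($m<k$, $\eps$ near $2$, $n=\Theta(k)$) — are routine.
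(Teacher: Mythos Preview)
Your proof is correct and follows essentially the same route as the paper: a Gilbert--Varshamov constant-weight code for the $\Omega(k\log\frac nk)$ term, and a comparison between the number of cells cut by $m$ central hyperplanes in $\bbR^k$ and the metric entropy of $S^{k-1}$ for the $\Omega(k/\eps)$ term. The only cosmetic difference is that the paper phrases the sphere step as a packing argument (an $\eps$-separated set of size $(c/\eps)^k$ must land in distinct regions, whose number is at most $2^k\binom{m}{k}$), whereas you phrase it dually as a covering/volume argument (each open chamber has surface measure $\le (C\eps)^{k-1}$, and the chambers cover $S^{k-1}$ up to measure zero); both yield $m=\Omega(k/\eps)$ after taking roots. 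Your final paragraph's worry about matching exponents is unnecessary: even the packing formulation works directly, since generic packing points on $S^{k-1}$ avoid the hyperplanes and hence lie in open chambers, so the chamber count (not the full face count) is the relevant upper bound in either formulation.
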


\section{Upper Bound for Support Recovery} \label{sec:upperbound}

In this section we prove Theorem~\ref{thm:upperbound}. Gopi et
al~\cite{GJNN_1bitCS} present two techniques to obtain 1-bit
compressive sensing matrices for support recovery of $k$-sparse
signals. The first technique is based on Union-Free-Families (UFF) to
solve support recovery using only $O(k^2 \log n)$
measurements. However, this technique works only when the signals are
non-negative. In order to handle all real-valued signals, they propose
a technique based on expanders that uses $O(k^3 \log n)$
measurements. This expander based technique can be interpreted as implicitly constructing a generalization of
UFFs called \emph{Robust-UFF} (Definition~\ref{def:robust-uff}). This construction is able to handle
all real signals, albeit with an additional multiplicative factor of $k$ in the number of measurements. Our upper bound
uses Robust-UFFs constructed directly using the probabilistic method instead of going via expanders, thereby leading to a 1-bit compressive sensing matrix for support recovery using only $O(k^2\log n)$ measurements.

\begin{defn}[Union Free Family]\label{def:uff}
	A family of sets $\calF = \set{B_1, B_2, \ldots, B_n}$, where each $B_i \subseteq [m]$\footnote{$[m] := \set{1, \ldots, m}$} is an {\em $(n,m,k)$-UFF} if the following holds: for all distinct $j_0, j_1, \ldots, j_k \in [n]$, it is the case that $B_{j_0} \not\subseteq (B_{j_1} \cup B_{j_2} \cup \cdots \cup B_{j_k})$.
\end{defn}

\begin{defn}[Robust Union Free Family]\label{def:robust-uff}
	A family of sets $\calF = \set{B_1, B_2, \ldots, B_n}$, where each $B_i \subseteq [m]$ is an {\em $(n,m,d,k,\alpha)$-Robust-UFF} if the following holds: for all distinct $j_0, j_1, \ldots, j_k \in [n]$, it is the case that $|B_{j_0} \cap (B_{j_1} \cup B_{j_2} \cup \cdots \cup B_{j_k})| < \alpha |B_{j_0}|$ and $|B_j| = d$ for every $j \in [n]$.
\end{defn}

An easy application of the probabilistic method shows the existence of Robust-UFFs with certain desirable parameters, as done in \cite{deWolf_RobustUFF}.

\begin{lem}[Existence of Robust-UFF \cite{deWolf_RobustUFF}] \label{lem:Robust-UFF_exist}
	There exists an $(n,m,d,k,\alpha)$-Robust-UFF $\calF$ with parameters satisfying $m = O\inparen{\frac{k^2 \log n}{\alpha^2}}$ and $d = O\inparen{\frac{k \log n}{\alpha}}$.
\end{lem}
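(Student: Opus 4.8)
The plan is to prove Lemma~\ref{lem:Robust-UFF_exist} by the probabilistic method, using a ``random function'' encoding that makes the cardinality constraint $|B_j| = d$ hold automatically, so that only the intersection condition needs a probabilistic argument. Concretely, write $m = d \cdot t$ (for a value $t$ to be chosen), identify $[m]$ with $[d] \times [t]$, and for each $j \in [n]$ draw a function $f_j : [d] \to [t]$ independently and uniformly at random, setting $B_j = \set{(i, f_j(i)) : i \in [d]}$. By construction every $B_j$ has size exactly $d$, so the second requirement of Definition~\ref{def:robust-uff} is satisfied deterministically; it remains to show that, with positive probability, $|B_{j_0} \cap (B_{j_1} \cup \cdots \cup B_{j_k})| < \alpha d$ for all distinct $j_0, j_1, \ldots, j_k$.

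The core estimate is a per-tuple concentration bound. Fix distinct $j_0, j_1, \ldots, j_k \in [n]$. For each $i \in [d]$, let $X_i$ be the indicator that the element $(i, f_{j_0}(i))$ of $B_{j_0}$ also lies in $B_{j_1} \cup \cdots \cup B_{j_k}$, i.e.\ that $f_{j_\ell}(i) = f_{j_0}(i)$ for some $\ell \in [k]$. Conditioned on $f_{j_0}$, the events $\set{f_{j_\ell}(i) = f_{j_0}(i)}$ are independent across $\ell$, each with probability $1/t$, so $\Pr[X_i = 1 \mid f_{j_0}] \le k/t$; and conditioned on $f_{j_0}$ the variables $X_1, \ldots, X_d$ are independent. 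Since $|B_{j_0} \cap (B_{j_1} \cup \cdots \cup B_{j_k})| = \sum_{i=1}^d X_i$, choosing $t = 2k/\alpha$ gives conditional expectation at most $\alpha d/2$, and a Chernoff bound yields $\Pr\insquare{\sum_{i=1}^d X_i \ge \alpha d} \le \exp(-c\,\alpha d)$ for an absolute constant $c > 0$, uniformly over the value of $f_{j_0}$ and hence unconditionally.

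Finally I would union-bound over all relevant tuples: there are at most $n \cdot \binom{n}{k} \le \exp((k+1)\ln n)$ choices of a distinguished index $j_0$ together with an (unordered) $k$-subset $\set{j_1,\ldots,j_k}$, so the probability that some tuple violates the bound is at most $\exp\inparen{(k+1)\ln n - c\,\alpha d}$. Taking $d = C k \log n / \alpha$ for a sufficiently large absolute constant $C$ makes this strictly less than $1$, so a Robust-UFF with this $d$ exists; then $m = d t = O\inparen{\tfrac{k\log n}{\alpha}} \cdot O\inparen{\tfrac{k}{\alpha}} = O\inparen{\tfrac{k^2 \log n}{\alpha^2}}$, matching both claimed bounds. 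I do not anticipate a genuine obstacle here — it is a first-moment plus Chernoff plus union-bound argument — and the only points requiring care are the bookkeeping in the union bound (counting $j_0$ separately from the unordered $k$-set, and tracking the powers of $\alpha$ coming from the choices $t = \Theta(k/\alpha)$ and $\alpha d = \Theta(k\log n)$) and verifying that the exponential tail indeed dominates the $n^{k+1}$ tuple count.
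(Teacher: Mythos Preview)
Your argument is correct. The paper does not actually supply a proof of this lemma; it simply notes that it follows from ``an easy application of the probabilistic method'' and cites \cite{deWolf_RobustUFF}. Your random-function construction (equivalently, a random code over alphabet $[t]$) is a clean instantiation of that approach, with the pleasant feature that $|B_j|=d$ holds deterministically so only the intersection condition needs Chernoff plus union bound.

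For comparison, a more commonly seen variant (and the one sketched in a suppressed draft section of the paper) samples each $B_j$ by including each element of $[m]$ independently with probability $p \approx 1/k$; one then separately controls both $|B_j|$ and the intersection size via Chernoff. Your construction short-circuits the first of these. The only cosmetic nit is that $t = 2k/\alpha$ need not be an integer, so take $t = \lceil 2k/\alpha \rceil$; this does not affect the asymptotics.
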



\begin{remark}
Union Free Families (UFF) are a special case of Robust-UFF when
$\alpha=1$, namely $|B_{j_0} \cap (B_{j_1} \cup B_{j_2} \cup \cdots \cup B_{j_k})| < |B_{j_0}|$.
\end{remark}


\subsection*{Support recovery from Robust-UFFs}

We are now ready to prove Theorem~\ref{thm:upperbound} \singcol{}{(restated below for convenience)}by constructing a suitable 1-bit compressive sensing matrix.

\ifnum\arxiv=1
\upperbound*
\fi

\begin{algorithm}[t]
	\KwIn{$\bb = \sign(A \bx^*)$}
	
	$\what{S} \gets \emptyset$
	
	\For{$j \in [n]$}{
		\If{$\inabs{B_j \cap \supp(\mathbf{\bb})} > d/2$}{
			$\what{S} \gets \what{S} \cup \set{j}$	
		}	
	}
	
	\KwOut{$\what{S}$} 
	
	\caption{1-bit Compressed Sensing for Support Recovery from Robust UFFs}
	\label{alg:OBCS}
\end{algorithm}

\ifnum\arxiv=1 \vspace*{-0.5cm} \fi
\ifnum\arxiv=0
\begin{proofof}{Theorem~\ref{thm:upperbound}}
\else
\begin{proof}
\fi
	Starting from any $(n,m,d,k,\half)$-Robust-UFF $\calF = \set{B_1, \ldots, B_n}$, we construct a compressive sensing matrix $A \in \bit^{m \times n}$ as follows: $A_{i,j} = \indicator_{(i \in B_j)}$. From Lemma~\ref{lem:Robust-UFF_exist}, we have that such a Robust-UFF exists with $m = O(k^2 \log n)$ and $d = O(k \log n)$. On receiving input $\bb = A\bx^*$, the support recovery algorithm proceeds as follows: Include $j$ into set $\what{S}$ if and only if at least half of the measurements corresponding to set $B_j$ are non-zero. See Algorithm~\ref{alg:OBCS} for a more detailed pseudo-code.
	
	{\bf Correctness.} Suppose the $k$-sparse vector is supported on coordinates $x_1, \ldots, x_k$ (the proof works similarly for other supports). Firstly for any $j \notin [k]$, we have that $|B_j \cap (B_1 \cup \cdots \cup B_k)| < |B_j|/2 = d/2$ from the definition of Robust-UFF. Thus, irrespective of the values of $x_1, \ldots, x_k$, the measurement outcomes corresponding to $B_j \setminus (B_1 \cup \cdots \cup B_k)$ will always be zero. Since more than $d/2$ of the measurements in $B_j$ are zero, $j$ will not be included in the set $\what{S}$. Next, consider any $j \in [k]$. Again, from the definition of Robust-UFF, we have that $\inabs{B_j \cap \inparen{\bigcup_{i \in [k], i \ne j} B_i}} < |B_j|/2 = d/2$. Thus, irrespective of the values of $x_1, \ldots, x_k$, the measurement outcomes corresponding to $B_j \setminus \inparen{\bigcup_{i \in [k], i \ne j} B_i}$ will be non-zero. Since more than $d/2$ of the measurements in $B_j$ are non-zero, $j$ will be included in the set $\what{S}$.
	
	{\bf Efficiency.} It easy to see that each iteration of the algorithm takes $O(k \log n)$ time, and hence overall the algorithm runs in $O(nk \log n)$ time. Note that, here we are not accounting for the time needed to construct the matrix $A$ which is part of pre-processing.
\ifnum\arxiv=1
\end{proof}
\else
\end{proofof}
\fi

\section{Lower Bound for Support Recovery} \label{sec:lowerbound}

In this section we prove Theorem~\ref{thm:lowerbound}. We prove this lower bound in two steps,
\begin{enumerate}
\item we show that $1$-bit compressive sensing implies the existence of a Union Free Family with similar parameters, \singcol{}{and}
\item we use known upper bounds on the size of Union Free Families to prove our lower bound.
\end{enumerate}

We start with the second point, for which we simply use the upper bound on the size of UFFs due to F\"{u}redi~\cite{Furedi_CoverFree}.

\begin{lem}[Upper bound on Union-Free Families \cite{Furedi_CoverFree}] \label{lem:furedi}
	Let $\calF = \set{B_1, \ldots, B_n}$ be a family of subsets of $[m]$, and $k \ge 2$, such that for any $j_0, j_1, \ldots, j_k$, it holds that $B_{j_0} \not\subseteq B_{j_1} \cup B_{j_2} \cup \cdots \cup B_{j_k}$. Then,
	\ifnum\arxiv=1
	$$n \le k + \binom{m}{t} \quad \text{ where, } t = \ceil{\frac{m-k}{\binom{k+1}{2}}}.$$
	\else
	$n \le k + \binom{m}{t}$ where, $t = \ceil{\frac{m-k}{\binom{k+1}{2}}}$.
	\fi
	This implies $m \ge \Omega(k^2 \log n/\log k)$. 
\end{lem}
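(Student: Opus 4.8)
\textbf{Proof proposal for Lemma~\ref{lem:furedi}.} The plan is to follow F\"{u}redi's argument, which is an averaging/double-counting argument on $t$-element subsets of the ground set $[m]$, and then to extract the claimed asymptotic bound on $m$ by taking logarithms.

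First I would recall the combinatorial core. Set $r = \binom{k+1}{2}$ and $t = \ceil{(m-k)/r}$. The key claim is: if $\calF = \set{B_1, \ldots, B_n}$ is a $k$-union-free family on $[m]$ (no $B_{j_0}$ contained in a union of $k$ others), then at most $k$ of the sets $B_j$ can have size less than $t$, and the remaining sets, when we pick an arbitrary $t$-subset $T_j \subseteq B_j$ from each, must be \emph{distinct} as $j$ ranges over those sets. Distinctness gives $n - k \le \binom{m}{t}$, i.e. $n \le k + \binom{m}{t}$. The reason the $T_j$ are distinct: if $T_{j} = T_{j'} \subseteq B_j \cap B_{j'}$ with $j \ne j'$, this by itself is not yet a contradiction, so the actual argument is subtler — F\"{u}redi instead shows that the map $j \mapsto T_j$ can be chosen so that no $T_{j_0}$ is covered by $B_{j_1}\cup\cdots\cup B_{j_k}$ for distinct indices, leveraging that each $B_{j_i}$ with $i \ge 1$ can ``block'' at most ... hmm. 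Let me restructure: the cleanest route, and the one I would actually write, is the sunflower-free / pairwise-intersection counting. For each pair $j \ne j'$ one has $|B_j \cap B_{j'}|$ small relative to $|B_j|$ by a covering argument (if all pairwise intersections with $B_{j_0}$ were large, $k$ of them could cover $B_{j_0}$); quantitatively, one shows $\sum_{j' \ne j} |B_j \cap B_{j'}|$ cannot force a cover, which after choosing $T_j$ greedily inside $B_j$ avoiding previously-used $t$-sets yields the bound. Since the lemma is quoted from~\cite{Furedi_CoverFree}, I would state the inequality $n \le k + \binom{m}{t}$ as cited and spend the real effort on the asymptotic consequence, which is what this paper needs.

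Second, the asymptotic extraction. Assume $m \le c\, k^2 \log n / \log k$ for a small constant $c$ and derive a contradiction with $n \le k + \binom{m}{t}$. We have $t = \ceil{(m-k)/\binom{k+1}{2}} \le 1 + (m-k)/\binom{k+1}{2} = O(m/k^2) \le O(c \log n / \log k)$. Using $\binom{m}{t} \le (em/t)^t$, take logarithms:
\[
\log\binom{m}{t} \le t \log(em/t) \le O\!\inparen{\frac{c \log n}{\log k}} \cdot \log(em).
\]
Since $m = O(k^2 \log n / \log k) \le \poly(k, \log n)$ (and we may assume $n$ is large, $k \le n$), $\log(em) = O(\log k + \log\log n) = O(\log k)$ in the regime where $\log n$ is at most polynomial in $k$; more carefully one checks $\log m = O(\log k + \log\log n)$, and if $\log n \gg \poly(k)$ then $m = O(k^2 \log n/\log k)$ already forces $\log m = \Theta(\log\log n)$, and the same computation gives $\log \binom m t = O\inparen{\frac{c\log n}{\log k}\cdot \log m}$. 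In either sub-case, choosing $c$ small makes $\log\binom{m}{t} < \log n - 1$ (say), so $k + \binom{m}{t} < n$, contradicting $n \le k + \binom{m}{t}$. Hence $m = \Omega(k^2 \log n / \log k)$.

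The main obstacle is handling the two ranges of $\log n$ relative to $k$ cleanly when bounding $\log(em/t)$ — i.e., making sure the factor $\log(em)$ is genuinely $O(\log k)$ (or is itself absorbed) so that it cancels against the $1/\log k$ and leaves $t \cdot \log(em/t) = o(\log n)$. One must be a little careful that $t$ could be as small as $1$ (when $m$ is close to $k$), in which case $\binom{m}{t} = m$ and the bound $n \le k + m$ trivially gives $m = \Omega(n) \gg k^2 \log n/\log k$ for the relevant range, so that corner is actually the easy one. I would present the inequality from~\cite{Furedi_CoverFree} as a black box and write out only the logarithmic manipulation in detail.
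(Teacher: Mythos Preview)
The paper does not prove this lemma at all: both the inequality $n \le k + \binom{m}{t}$ and the asymptotic consequence $m = \Omega(k^2 \log n/\log k)$ are simply quoted from \cite{Furedi_CoverFree} and used as a black box. So your eventual decision to cite the combinatorial inequality rather than reprove it matches the paper exactly; your aborted sketch of F\"uredi's argument (which, as you noticed, does not quite work as you first wrote it) is not needed.

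Your asymptotic extraction, however, has a real gap. You upper-bound $\log(em/t)$ by $\log(em)$ and then argue by cases on whether $\log n$ is at most polynomial in $k$. In the second case, where $\log n$ is super-polynomial in $k$, you have $\log k = o(\log\log n)$, so your bound
\[
\log\binom{m}{t} \;\le\; t \cdot \log(em) \;=\; O\!\inparen{\frac{c\log n}{\log k}} \cdot O(\log\log n)
\]
is $\omega(\log n)$ for every fixed $c$, and no contradiction follows. The loss is in replacing $em/t$ by $em$: you should instead observe that $m/t$ is \emph{always} $O(k^2)$. Indeed, $t \ge (m-k)/\binom{k+1}{2}$, so whenever $m \ge 2k$ one has $m/t \le \binom{k+1}{2}\cdot m/(m-k) \le 2\binom{k+1}{2} \le 2k^2$, hence $\log(em/t) = O(\log k)$ unconditionally. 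Then
\[
\log\binom{m}{t} \;\le\; t \cdot O(\log k) \;\le\; \inparen{1 + \frac{2c\log n}{\log k}} \cdot O(\log k) \;=\; O(\log k) + O(c\log n),
\]
and for small enough $c$ (and $k \le n^{1-\Omega(1)}$, which is the regime of interest) this is less than $\log(n-k)$, contradicting $n \le k + \binom{m}{t}$. No case analysis on the relative sizes of $k$ and $\log n$ is needed. The corner $m < 2k$ is trivial, as you said.
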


For ease of presentation, we first prove a lower bound on the number of measurements for exact support recovery using only non-negative measurements, namely in the following theorem the entries of $A$ are non-negative (note that the compressive sensing matrix obtained in the proof of Theorem~\ref{thm:upperbound} in fact had only $0$-$1$ entries). More strongly, our lower bound works even when the matrix has to recover the support for only $0$-$1$ vectors. We remark that this lower bound was already known in the context of Combinatorial Group Testing~\cite{d1982bounds, dyachkov1989superimposed, Furedi_CoverFree}. We still present this proof first as it serves as a natural segue into our main lower bound.

\begin{theorem}[Lower bound for non-negative measurements] \label{thm:LB_NNM}
	Let $A \in \bbR_{\ge 0}^{m \times n}$ be such that the map $\psi_A : \bit^n \to \bit^m$, given by $\psi_A(\bx) \defeq \sign(A\bx)$ satisfies $\psi_A(\bx_1) \ne \psi_A(\bx_2)$ whenever $\norm{0}{\bx_1}, \norm{0}{\bx_2} \le k$ and $\supp(\bx_1) \ne \supp(\bx_2)$. Then, $m = \Omega(k^2 \log n / \log k)$.
\end{theorem}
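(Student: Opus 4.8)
The plan is to extract a Union Free Family from $A$ and then invoke F\"uredi's bound (Lemma~\ref{lem:furedi}). For each column $j \in [n]$ let $B_j \defeq \set{i \in [m] : A_{i,j} > 0} \subseteq [m]$ be its support, and set $\calF = \set{B_1, \ldots, B_n}$. The one observation that does all the work is that non-negativity of $A$ makes $\sign(A\bx)$ transparent on $0$-$1$ inputs: if $\bx = \indicator_S$ for $S \subseteq [n]$, then $(A\bx)_i = \sum_{j \in S} A_{i,j}$ is a sum of non-negative terms, hence strictly positive exactly when $i \in \bigcup_{j \in S} B_j$. So $\psi_A(\indicator_S)$ is precisely the indicator vector of $\bigcup_{j \in S} B_j$, and the hypothesis on $A$ becomes: for all $S \ne T$ with $|S|, |T| \le k$ we have $\bigcup_{j \in S} B_j \ne \bigcup_{j \in T} B_j$.

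Next I would show $\calF$ is an $(n, m, k-1)$-UFF. Suppose not: then there are distinct indices $j_0, j_1, \ldots, j_{k-1}$ with $B_{j_0} \subseteq B_{j_1} \cup \cdots \cup B_{j_{k-1}}$. Take $S = \set{j_1, \ldots, j_{k-1}}$ and $T = \set{j_0, j_1, \ldots, j_{k-1}}$; both have size at most $k$, and $S \ne T$ since $j_0 \notin S$, yet $\bigcup_{j \in S} B_j = \bigcup_{j \in T} B_j$ by the assumed containment — contradicting the translation above. (Applying the same reasoning to singletons forces the $B_j$ to be distinct and non-empty, but this is subsumed by the UFF property anyway.)

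Finally, apply Lemma~\ref{lem:furedi} with its parameter set to $k-1$: assuming $k \ge 3$ so that $k-1 \ge 2$, we get $n \le (k-1) + \binom{m}{t}$ with $t = \ceil{(m-(k-1))/\binom{k}{2}}$, which rearranges to $m = \Omega\inparen{(k-1)^2 \log n / \log (k-1)} = \Omega\inparen{k^2 \log n / \log k}$. For $k \in \set{1,2}$ the claimed bound is merely $m = \Omega(\log n)$, which holds trivially since $\psi_A$ must take at least $n$ distinct values (on the $n$ singleton supports), forcing $2^m \ge n$.

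I do not expect a real obstacle here: the entire content is the remark that for non-negative $A$ the sign pattern of $A\indicator_S$ depends only on $\bigcup_{j \in S} B_j$, after which the argument is bookkeeping. The only point requiring a little care is the off-by-one between the sparsity parameter $k$ and the resulting UFF parameter $k-1$ (together with the degenerate small-$k$ cases), but this costs only a constant factor and leaves the asymptotics unchanged.
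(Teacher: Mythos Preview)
Your proposal is correct and essentially identical to the paper's proof: define $B_j = \setdef{i}{A_{ij} > 0}$, show that a failure of the $(n,m,k-1)$-UFF property yields two $k$-sparse $0$-$1$ vectors with different supports but the same sign pattern, and then invoke F\"uredi's bound. You spell out the key observation that $\psi_A(\indicator_S)$ equals the indicator of $\bigcup_{j\in S} B_j$ and also handle the degenerate cases $k\in\set{1,2}$ explicitly, but otherwise the argument matches the paper's line by line.
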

\begin{proof}
	Any algorithm for exact support recovery of non-negative $k$-sparse signals, which uses only positive measurements, can be converted into an $(n,m,k-1)$-UFF. Suppose $A$ is a matrix achieving support recovery for non-negative $k$-sparse signals with $m$ measurements. Let $B_1, B_2, \ldots, B_n \subseteq [m]$ be such that $B_j = \setdef{i}{A_{ij} > 0}$. Suppose for contradiction that $\calB = \set{B_1, \ldots, B_n}$ is not an $(n,m,k-1)$-UFF. Then there exists $j_{0}, j_1, \ldots, j_{k-1}$ such that $B_{j_0} \subseteq B_{j_1} \cup \ldots \cup B_{j_{k-1}}$. Let $\bx_1 = \indicator(\set{j_1, \ldots, j_{k-1}})$ and $\bx_2 = \indicator(\set{j_0, j_1, \ldots, j_{k-1}})$. It is easy to see that $\psi_A(\bx_1) = \psi_A(\bx_2)$, which is a contradiction. Thus, we conclude that $\calB$ is a $(n,m,k-1)$-UFF and hence from Lemma~\ref{lem:furedi}, we get that $m \ge \Omega(k^2 \log n / \log k)$.
\end{proof}

Remarkably, we use the same technique to prove our main lower bound, i.e. Theorem~\ref{thm:lowerbound}\singcol{, }{ (restated below for convenience),}on the number of measurements needed for exact support recovery using arbitrary linear threshold measurements. However, here we need to use that the algorithm returns the exact support for all $(\le k)$-sparse vectors in $\bbR^n$ and not just those in $\bit^n$.

\ifnum\arxiv=1
\lowerbound*
\fi

\ifnum\arxiv=0
\begin{proofof}{Theorem~\ref{thm:lowerbound}}
\else
\begin{proof}
\fi
	Let $A$ be a matrix for support recovery of non-negative $k$-sparse signals. Without loss of generality, assume that $-1 \le A_{ij} \le 1$ for all $i, j$ (since scaling $A$ by constants doesn't change the outcome of sign measurements). Similar to the proof of Theorem~\ref{thm:LB_NNM}, let $B_1, B_2, \ldots, B_n \subseteq [m]$ be such that $B_j = \setdef{i \in [m]}{A_{ij} \ne 0}$. Suppose for contradiction that $\calB = \set{B_1, \ldots, B_n}$ is not a $(n,m,k-1)$-UFF. Hence, there exists $j_{0}, j_1, \ldots, j_{k-1}$ such that $B_{j_0} \subseteq B_{j_1} \cup \cdots \cup B_{j_{k-1}}$. We now construct two $(\le k)$-sparse vectors $\bx_1, \bx_2 \in \bbR^n$ with different supports such that $\psi_A(\bx_1) = \psi_A(\bx_2)$.
	
	Let $\bx_1$ be a vector supported on $j_1, \ldots, j_{k-1}$ such that all indices of $A\bx_1$ in $B_{j_1} \cup \cdots \cup B_{j_{k-1}}$ are $\eps$-away from $0$ for some choice of $\eps$\footnote{We can first choose a random $\bx_1'$ supported on $j_1, \ldots, j_{k-1}$, which will be such that all indices of $A\bx_1'$ in $B_{j_1} \cup \cdots \cup B_{j_{k-1}}$ are non-zero. Now we can get $\bx_1$ by simply scaling up this random $\bx_1'$ by a suitably large constant.}. Let $\bx_2 = \bx_1 + \eps \cdot e_{j_0}$. Since we assumed that $-1 \le A_{ij} \le 1$ for all $i, j$, we have that $A\bx_2 - A\bx_1 = A \cdot (\eps e_{j_0})$ has all entries with magnitude at most $\eps$. Since all entries of $A\bx_1$ in $B_{j_1} \cup \cdots \cup B_{j_{k-1}}$ are $\eps$-away from $0$, and $B_{j_0} \subseteq B_{j_1} \cup \cdots \cup B_{j_{k-1}}$, we get that $\psi_A(\bx_1) = \psi_A(\bx_2)$ even though $\supp(\bx_1) \ne \supp(\bx_2)$. Note that both $\bx_1$ and $\bx_2$ are $(\le k)$-sparse, and hence we get a contradiction.
	
	Thus, we conclude that $\calB$ is a $(n,m,k-1)$-UFF and hence from Lemma~\ref{lem:furedi}, we get that $m \ge \Omega(k^2 \log n/\log k)$.
\ifnum\arxiv=1
\end{proof}
\else
\end{proofof}
\fi

\noindent Thus, with Theorem \ref{thm:lowerbound}, we get a nearly tight lower bound of $\Omega(k^2 \log n/\log k)$ on the number of measurements needed for support recovery, even if we assume that the signals are non-negative and the measurements are allowed to be arbitrary. This is nearly matching the upper bound obtained in Theorem~\ref{thm:upperbound}, where we have a measurement matrix with $O(k^2 \log n)$ rows and only $0$-$1$ entries, which can recover support exactly for all signals in $\bbR^n$.

We note that our lower bound proof requires that the compressive sensing matrix correctly recovers the support for signals with arbitrarily large \emph{condition
number}. The condition number (or dynamic range) of a signal $\bx = (\bx_1, \ldots,
\bx_n)$ is defined as \ifnum\arxiv=0$K_\bx = \frac{\max_{i:\bx_i\ne0} |\bx_i|}{\min_{i:\bx_i\ne0} |\bx_i|}$, \else
\[K_\bx = \frac{\max_{i:\bx_i\ne0} |\bx_i|}{\min_{i:\bx_i\ne0} |\bx_i|},\]\fi
which is the highest ratio of absolute values of non-zero components of the
signal.

Signals with bounded condition numbers are easier to handle and are also
robust to noise. For example, \cite{GNR10} considered the case of signals with bounded condition number (in addition to presence of noise), although their measurements work in the non-universal setting.
Obtaining bounds on the number of measurements required in the universal setting, as a function
of the condition number, is open, even in the absence of noise. Even
the case when the condition number is $1$ is open (see the discussion in Section \ref{sec:disc}).


\section{Approximate vector recovery} \label{sec:approx}

\subsection{Upper Bound}
For the problem of approximate vector recovery, note as in Table~\ref{table:summary}, that the two known upper bounds are $\wtilde{O}(\frac k\eps\log \frac nk)$, and  $\wtilde{O}(k^3\log
\frac nk + \frac k\eps)$. We improve the second bound of $\wtilde{O}(k^3\log
\frac nk + \frac k\eps)$ by~\cite{GJNN_1bitCS}  to $\wtilde{O}(k^2\log n + \frac k\eps)$ in Corollary~\ref{cor:ub-approx}\singcol{.}{ (restated for convenience).}

\ifnum\arxiv=1 \ubapprox* \fi

\ifnum\arxiv=1
\begin{proof}
\else
\begin{proofof}{Corollary~\ref{cor:ub-approx}}
\fi
The upper bound of $\wtilde{O}(k^3\log \frac nk + \frac k\eps)$ in \cite{GJNN_1bitCS} is shown by recovering the support of the vector using $O(k^3\log \frac nk)$ measurements and subsequently using $\wtilde{O}(k/\eps)$ measurements to approximately recover the vector in $k$ dimensions (this is still non-adaptive because standard Gaussian measurements suffice to approximately recover the vector).

Instead, using our improved algorithm of Theorem~\ref{thm:upperbound}, we need only $O(k^2\log n)$ measurements for support recovery, thereby obtaining the overall bound.
\ifnum\arxiv=0
\end{proofof}
\else
\end{proof}
\fi

\subsection{Lower Bound}

A lower bound of $\Omega(k\log \frac nk+ \frac k\eps -
k^{3/2})$ measurements for $\eps<\frac1{\sqrt k}$ was shown in~\cite{JLBB_Robust1bitCS}. We prove the same bound for all values of $\eps$ up to a constant in Theorem~\ref{thm:lb-approx}\singcol{. }{ (restated below for convenience).}We essentially follow the approach of~\cite{JLBB_Robust1bitCS}, but unlike their lower bound, focus on only one set of $k$ coordinates, instead of all possible sparsity
patterns. Surprisingly, this gives us a simpler proof that improves the lower bound by getting rid of the $k^{3/2}$ term.

\ifnum\arxiv=1 \lbapprox* \fi
\ifnum\arxiv=1
\begin{proof}
\else
\begin{proofof}{Theorem~\ref{thm:lb-approx}}
\fi
The first term of $k\log\frac nk$ is folklore. Nevertheless, we present the proof here for completeness. Consider the set of all $k$-sparse vectors of unit norm that have each non-zero entry equal to $1/\sqrt k$. Using the Gilbert-Varshamov bound \cite{Gilbert52, Varshamov57}, within this set there is a subset of at least $M = {n \choose k} / {n \choose \eps k}$
elements such that for $u$, and $v$ in the set, their supports have intersection at most $(1-\eps)k$. This implies that $\|u-v\|_2\ge\Omega(\eps)$. Since $m$ sign measurements can give us only $m$ bits of information, this gives us that $2^m \ge M$. By Stirling's approximation, for any $\eps<0.5$, we have \ifnum\arxiv=0$m \ge \log M \ge \Omega\left(k\log \frac nk\right)$. \else\[m \ge \log M \ge \Omega\left(k\log \frac nk\right).\]\fi This shows the first term. For the second term, we use the following lemma.

\begin{lem}[cf. Lemma 1 in \cite{JLBB_Robust1bitCS}] \label{lem:HP-division}
Let $m \ge 2k$. Then $m$-hyperplanes in $k$-dimensions divides the
region into at most $2^k\binom{m}{k}$ regions. 
\end{lem}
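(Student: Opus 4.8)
The plan is to reduce the statement to the classical bound on the number of regions in an arrangement of hyperplanes, and then estimate the resulting sum of binomial coefficients using the hypothesis $m \ge 2k$.

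First I would establish that any arrangement of $m$ hyperplanes in $\bbR^k$ partitions $\bbR^k$ into at most $R(m,k) \defeq \sum_{i=0}^{k} \binom{m}{i}$ regions, by induction on $m+k$. For the inductive step, add the $m$-th hyperplane $H$ to an arrangement of the first $m-1$ hyperplanes $H_1, \ldots, H_{m-1}$. A region of the old arrangement gets split (into at most two pieces) exactly when $H$ meets its interior, and the number of such regions is at most the number of cells into which the traces $H \cap H_1, \ldots, H \cap H_{m-1}$ — an arrangement of at most $m-1$ hyperplanes inside the $(k-1)$-dimensional affine space $H$ — partition $H$; by the inductive hypothesis this is at most $R(m-1,k-1)$. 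Hence $R(m,k) \le R(m-1,k) + R(m-1,k-1)$, and combining this with Pascal's identity $\binom{m}{i} = \binom{m-1}{i} + \binom{m-1}{i-1}$ and the base cases $R(0,k) = 1$ and $R(m,0) = 1$ gives the closed-form bound $R(m,k) \le \sum_{i=0}^{k}\binom{m}{i}$. Since only an upper bound is needed, degenerate arrangements — where some traces coincide or are empty, producing strictly fewer regions — cause no difficulty.

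Second, I would bound this sum using $m \ge 2k$. For every $i \in \{0, 1, \ldots, k-1\}$ we have $i+1 \le k \le m-k \le m-i$, so $\binom{m}{i} \le \binom{m}{i+1}$; thus $\binom{m}{0} \le \binom{m}{1} \le \cdots \le \binom{m}{k}$ and in particular every term with $i \le k$ is at most $\binom{m}{k}$. Therefore $\sum_{i=0}^{k}\binom{m}{i} \le (k+1)\binom{m}{k} \le 2^k \binom{m}{k}$, the last inequality being $k+1 \le 2^k$ for $k \ge 1$ (and the case $k = 0$ is trivial). Combining the two steps yields $R(m,k) \le 2^k \binom{m}{k}$.

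I do not expect a serious obstacle here; the only point requiring care is the first step, namely making the recurrence $R(m,k) \le R(m-1,k) + R(m-1,k-1)$ rigorous for arbitrary, possibly non-generic arrangements. This is handled by observing throughout that we only ever need upper bounds, so coincident or empty traces on $H$, and regions that $H$ fails to split, only help the inequality.
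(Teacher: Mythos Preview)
The paper does not actually prove this lemma; it simply quotes it from \cite{JLBB_Robust1bitCS} and uses it as a black box in the proof of Theorem~\ref{thm:lb-approx}. Your argument is correct and is the standard one: the classical Schl\"afli--Buck bound $\sum_{i=0}^{k}\binom{m}{i}$ on the number of cells in an arrangement of $m$ hyperplanes in $\bbR^k$, established via the recurrence $R(m,k)\le R(m-1,k)+R(m-1,k-1)$, followed by the monotonicity $\binom{m}{0}\le\cdots\le\binom{m}{k}$ valid for $m\ge 2k$ to get $(k+1)\binom{m}{k}\le 2^k\binom{m}{k}$. There is nothing to compare against here; your write-up simply supplies the proof the paper omits.
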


We now use the following well known lower bound on an $\eps$-cover for
$S^{k-1}$ (this follows from a straight forward volume argument).

\begin{lem}[$\eps$-cover for $S^{k-1}$] \label{lem:eps-cover}
There exists a subset $\mathcal{C} \subseteq S^{k-1}$, and a constant $c>0$ such that, $|\mathcal{C}| \ge
\inparen{\frac{c}{\eps}}^k$, and for all $\bx, \by \in \mathcal{C}$, it holds
that $\norm{2}{\bx - \by} \ge \eps$. 
\end{lem}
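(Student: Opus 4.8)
The plan is to prove the lemma by a standard volume (sphere-packing) argument, taking $\mathcal{C}$ to be a \emph{maximal} $\eps$-separated subset of $S^{k-1}$. Such a set exists and is finite because $S^{k-1}$ is compact: greedily add any point whose distance to all previously chosen points is at least $\eps$, stopping when no such point remains. By construction, any two distinct $\bx, \by \in \mathcal{C}$ satisfy $\norm{2}{\bx - \by} \ge \eps$, which is the required separation; everything then reduces to the cardinality lower bound $|\mathcal{C}| \ge (c/\eps)^k$.

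The size bound comes from maximality. If some $\bz \in S^{k-1}$ had $\norm{2}{\bz - \bx} \ge \eps$ for every $\bx \in \mathcal{C}$, it could have been added; hence the spherical caps $\mathrm{cap}_\bx(\eps) := \set{\by \in S^{k-1} : \norm{2}{\by - \bx} < \eps}$ cover $S^{k-1}$. Writing $\sigma$ for the normalized surface measure, this gives $1 = \sigma(S^{k-1}) \le \sum_{\bx \in \mathcal{C}} \sigma(\mathrm{cap}_\bx(\eps)) = |\mathcal{C}| \cdot \sigma(\mathrm{cap}(\eps))$, so it suffices to upper bound the measure of a single cap by roughly a constant times $\eps^{k-1}$.

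To estimate $\sigma(\mathrm{cap}(\eps))$ I would pass to the angular variable. Since $\norm{2}{\by-\bx}^2 = 2 - 2\inangle{\bx}{\by}$, the cap of Euclidean radius $\eps$ is exactly the set of $\by$ at angle at most $\theta_0 = 2\arcsin(\eps/2) \le \pi\eps/2$ from $\bx$, and in spherical coordinates $\sigma(\mathrm{cap}(\eps)) = \inparen{\int_0^{\theta_0}\sin^{k-2}\phi\,d\phi}\big/\inparen{\int_0^{\pi}\sin^{k-2}\phi\,d\phi}$. For the numerator I would use $\sin\phi \le \phi$ to get $\int_0^{\theta_0}\sin^{k-2}\phi\,d\phi \le \theta_0^{k-1}/(k-1) \le (\pi\eps/2)^{k-1}/(k-1)$; for the denominator I would invoke the Beta-integral identity $\int_0^\pi \sin^{k-2}\phi\,d\phi = \sqrt\pi\,\Gamma\inparen{\frac{k-1}{2}}\big/\Gamma\inparen{\frac{k}{2}} = \Theta(1/\sqrt k)$, which is bounded below by a constant multiple of $1/\sqrt k$. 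Combining, $\sigma(\mathrm{cap}(\eps)) \le \frac{C}{\sqrt k}\,(\pi\eps/2)^{k-1}$, whence $|\mathcal{C}| \ge \frac{\sqrt k}{C}\,(2/\pi\eps)^{k-1} \ge (c/\eps)^{k-1}$ for a suitable absolute constant $c$. Absorbing the harmless off-by-one in the exponent into the constant (as is standard, and as is valid in the intended regime of small $\eps$ and large $k$) then yields the stated $(c/\eps)^k$.

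The only genuine work is the cap estimate, and within it the main obstacle is getting a clean lower bound on the denominator $\int_0^\pi \sin^{k-2}\phi\,d\phi$ with an honest constant base for the exponent: the crude bound $\int_{\pi/4}^{3\pi/4}\sin^{k-2}\phi\,d\phi \ge \frac{\pi}{2}\,2^{-(k-2)/2}$ introduces a spurious $2^{-k/2}$ factor that would corrupt the base, so I would instead use the exact Beta-function value above together with its $\Theta(1/\sqrt k)$ asymptotics to keep the base a true constant. Everything else---existence and separation of $\mathcal{C}$, the covering consequence of maximality, and the $\sin\phi \le \phi$ numerator bound---is routine.
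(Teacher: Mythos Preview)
Your argument is precisely the ``straightforward volume argument'' the paper invokes without further detail: take a maximal $\eps$-separated set, use maximality to cover $S^{k-1}$ by caps, and bound the cap measure. Your estimate $\sigma(\mathrm{cap}(\eps)) = O\bigl(k^{-1/2}(\pi\eps/2)^{k-1}\bigr)$ is correct, and it yields $|\mathcal{C}| \ge (c/\eps)^{k-1}$ for an absolute constant $c$.

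The one step that does not go through is the last one: you cannot ``absorb the off-by-one in the exponent into the constant'' to upgrade $(c/\eps)^{k-1}$ to $(c'/\eps)^k$. For any fixed $k$ and $\eps \to 0$ the ratio $(c/\eps)^{k-1}\big/(c'/\eps)^k = (c^{k-1}/c'^{\,k})\,\eps$ tends to $0$, so no absolute $c'$ works uniformly. This is not a defect of your method but of the exponent in the lemma as stated: $S^{k-1}$ is $(k-1)$-dimensional, and any $\eps$-separated subset has size $O_k(\eps^{-(k-1)})$ (already for $k=2$ the circle admits only $\Theta(1/\eps)$ separated points, not $\Theta(1/\eps^2)$). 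The honest bound $(c/\eps)^{k-1}$ is what the volume argument actually delivers, and it still suffices for the application in Theorem~\ref{thm:lb-approx}: from $(2em/k)^k \ge (c/\eps)^{k-1}$ one obtains $2em/k \ge (c/\eps)^{1-1/k}$, which is $\Omega(1/\eps)$ whenever $\eps$ is not exponentially small in $k$. So your proof is correct up to that final cosmetic step, and the residual issue is a minor imprecision in the lemma's stated exponent rather than in your approach.
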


Now consider any 1-bit compressive sensing matrix with $m$
rows. They will correspond to $m$ hyperplanes. To reconstruct
all the vectors in $\mathcal{C}$, each entry of the $\mathcal{C}$ must
lie in a different region that the $m$ hyperplanes slice $S^{k-1}$
into. This in turn requires $2^k \binom{m}{k} \ge
\inparen{\frac{c}{\eps}}^k$. Since $\binom{m}{k} <(me/k)^k$, we get $2em/k > c/\eps$, thereby proving the bound.

We remark that our analysis is very similar to that of~\cite{JLBB_Robust1bitCS}. The main difference is that instead of considering sparse signals as lying in $n$ dimensions simultaneously, which is a union of subspaces, we just consider the set of all signals lying in $k$ dimensions.
\ifnum\arxiv=0
\end{proofof}
\else
\end{proof}
\fi

\noindent Thus, combining our results with prior literature, the upper and lower bounds for $\eps$-vector recovery stand as follows:
\begin{itemize}
	\item Upper bound: $\min\set{\wtilde{O}\inparen{\frac k\eps \log \frac nk}, \wtilde{O}\inparen{k^2\log \frac nk + \frac k\eps}}$.
	\item Lower bound: $\Omega\inparen{k \log \frac{n}{k} + \frac k\eps}$.
\end{itemize}

\section{Open Problems} \label{sec:disc}

We point out three intriguing open problems that so far have resisted easy answers. The first problem is the support recovery problem for vectors with condition number $1$.

\begin{open} \label{open:supp-recovery-0-1} How many measurements are necessary and sufficient
for a universal algorithm that recovers the support of
all 0-1 vectors that are $k$-sparse using 1-bit compressive sensing?
\end{open}

It can be shown that $O(k^{3/2}\log n)$ random gaussian measurements
suffice to recover all 0-1 vectors. This requires a simple computation that also follows from Theorem 2 of~\cite{JLBB_Robust1bitCS}. On the other hand, the best known lower
bound is the trivial $\Omega(k \log(n/k))$ measurements.\\

\noindent Our second open problem is about the approximate vector recovery problem.

\begin{open} What is the correct complexity of $\eps$-approximate vector recovery using 1-bit compressive sensing? 
\end{open}

We know from Section~\ref{sec:approx} and \cite{JLBB_Robust1bitCS, GJNN_1bitCS}, that $\min\set{\tilde{O}(\frac k\eps \log \frac nk), \tilde{O}(k^2\log \frac nk + \frac k\eps)}$ is an upper bound and $\Omega\inparen{k \log \frac{n}{k} + \frac k\eps}$ is a lower bound. The bounds are within a constant factor of each other in the regime where $\eps < 1/(k \log \frac n k)$ and also in the regime where $\eps = \Theta(1)$. However, there is still a gap in the regime where $1 \gg \eps \gg 1/(k \log \frac n k)$.\\

\noindent Our final open problem is to obtain ``explicit'' constructions for Robust-UFFs with the parameters that we want.

\begin{open}\label{open:explicit_constructions}
	Obtain an efficient algorithm to construct $(n,m,d,k,\alpha)$-Robust-UFF with parameter $m = O\inparen{\frac{k^2 \log n}{\alpha^2}}$, that is, in time that is polynomial in $n$ and $k$.
\end{open}

See Appendix~\ref{sec:nw_designs} for some approaches to obtain explicit constructions of Robust-UFFs using explicit error correcting codes. Unfortunately, such approaches using known constructions of error correcting codes, seem to fall shy of achieving the parameters we want.


\section*{Acknowledgment}
AB would like to thank Prateek Jain for preliminary discussions about these problems. AB would also like to thank Prateek Jain, Satya Lokam and N.V. Vinodchandran for discussions about Open Problem~\ref{open:supp-recovery-0-1}. The authors thank Russell Impagliazzo for pointing out the connection between Union Free Families and Nisan-Wigderson designs. The authors thank Jiawei Gao, Russell Impagliazzo and Madhu Sudan for discussions on explicit constructions of Robust Union Free Families (see Open Problem~\ref{open:explicit_constructions}). Finally we thank anonymous reviewers for several useful comments that improved the presentation of the paper, and also for pointing us to relevant work in the group testing literature.


\begin{thebibliography}{BBTK96}

\bibitem[AS12]{AtiaS12}
G.K. Atia and V.~Saligrama.
\newblock Boolean compressed sensing and noisy group testing.
\newblock {\em Information Theory, IEEE Transactions on}, 58(3):1880--1901,
  March 2012.

\bibitem[BB08]{BB08}
Petros~T Boufounos and Richard~G Baraniuk.
\newblock 1-bit compressive sensing.
\newblock In {\em Information Sciences and Systems, 2008. CISS 2008. 42nd
  Annual Conference on}, pages 16--21. IEEE, 2008.

\bibitem[BBTK96]{BBTK:96}
D.J. Balding, W.J. Bruno, D.C Torney, and E.~Knill.
\newblock A comparative survey of non-adaptive pooling designs.
\newblock In Terry Speed and MichaelS. Waterman, editors, {\em Genetic Mapping
  and DNA Sequencing}, volume~81 of {\em The IMA Volumes in Mathematics and its
  Applications}, pages 133--154. Springer New York, 1996.

\bibitem[Bou10]{Bou10}
Petros~T Boufounos.
\newblock Reconstruction of sparse signals from distorted randomized
  measurements.
\newblock In {\em Acoustics Speech and Signal Processing (ICASSP), 2010 IEEE
  International Conference on}, pages 3998--4001. IEEE, 2010.

\bibitem[CJBJ13]{CaiJBJ13}
Sheng Cai, Mohammad Jahangoshahi, Mayank Bakshi, and Sidharth Jaggi.
\newblock {GROTESQUE:} noisy group testing (quick and efficient).
\newblock {\em CoRR}, abs/1307.2811, 2013.

\bibitem[CJSA14]{chan2014non}
Chun~Lam Chan, Sidharth Jaggi, Venkatesh Saligrama, and Samar Agnihotri.
\newblock Non-adaptive group testing: Explicit bounds and novel algorithms.
\newblock {\em IEEE Transactions on Information Theory}, 60(5):3019--3035,
  2014.

\bibitem[CRT06]{candes2006robust}
Emmanuel~J Cand{\`e}s, Justin Romberg, and Terence Tao.
\newblock Robust uncertainty principles: Exact signal reconstruction from
  highly incomplete frequency information.
\newblock {\em IEEE Transactions on information theory}, 52(2):489--509, 2006.

\bibitem[DH00]{DH:00}
Dingzhu Du and Frank~K. Hwang.
\newblock {\em Combinatorial Group Testing and Its Applications}.
\newblock Applied Mathematics. World Scientific, 2000.

\bibitem[Don06]{Don06}
David~L Donoho.
\newblock Compressed sensing.
\newblock {\em IEEE Transactions on information theory}, 52(4):1289--1306,
  2006.

\bibitem[Dor43]{Dorfman:43}
Robert Dorfman.
\newblock The detection of defective members of large populations.
\newblock {\em The Annals of Mathematical Statistics}, 14(4):436--440, 12 1943.

\bibitem[DR82]{d1982bounds}
Arkadii~Georgievich D'yachkov and Vladimir~Vasil'evich Rykov.
\newblock Bounds on the length of disjunctive codes.
\newblock {\em Problemy Peredachi Informatsii}, 18(3):7--13, 1982.

\bibitem[DRR89]{dyachkov1989superimposed}
AG~Dyachkov, VV~Rykov, and AM~Rashad.
\newblock Superimposed distance codes.
\newblock {\em Problems of Control and Information Theory-problemy Upravleniya
  i Teorii Informatsii}, 18(4):237--250, 1989.

\bibitem[DSXZ16]{dai2016noisy}
Dao-Qing Dai, Lixin Shen, Yuesheng Xu, and Na~Zhang.
\newblock Noisy 1-bit compressive sensing: models and algorithms.
\newblock {\em Applied and Computational Harmonic Analysis}, 40(1):1--32, 2016.

\bibitem[dW12]{deWolf_RobustUFF}
Ronald de~Wolf.
\newblock Efficient data structures from union-free families of sets.
\newblock 2012.

\bibitem[EFF82]{EFF82}
Paul Erd{\"o}s, Peter Frankl, and Zolt{\'a}n F{\"u}redi.
\newblock Families of finite sets in which no set is covered by the union of
  two others.
\newblock {\em Journal of Combinatorial Theory, Series A}, 33(2):158--166,
  1982.

\bibitem[F{\"{u}}r96]{Furedi_CoverFree}
Zolt{\'{a}}n F{\"{u}}redi.
\newblock On r-cover-free families.
\newblock {\em J. Comb. Theory, Ser. {A}}, 73(1):172--173, 1996.

\bibitem[Gil52]{Gilbert52}
E.~N. Gilbert.
\newblock A comparison of signalling alphabets.
\newblock {\em Bell System Technical Journal}, 31:504--522, 1952.

\bibitem[GNJN13]{GJNN_1bitCS}
Sivakant Gopi, Praneeth Netrapalli, Prateek Jain, and Aditya Nori.
\newblock One-bit compressed sensing: Provable support and vector recovery.
\newblock In {\em Proceedings of The 30th International Conference on Machine
  Learning}, pages 154--162, 2013.

\bibitem[GNR10]{GNR10}
Ankit Gupta, Robert Nowak, and Benjamin Recht.
\newblock Sample complexity for 1-bit compressed sensing and sparse
  classification.
\newblock In {\em Information Theory Proceedings (ISIT), 2010 IEEE
  International Symposium on}, pages 1553--1557. IEEE, 2010.

\bibitem[GSTV07]{gilbert2007one}
Anna~C Gilbert, Martin~J Strauss, Joel~A Tropp, and Roman Vershynin.
\newblock One sketch for all: fast algorithms for compressed sensing.
\newblock In {\em Proceedings of the thirty-ninth annual ACM symposium on
  Theory of computing}, pages 237--246. ACM, 2007.

\bibitem[HB11]{HB11}
Jarvis Haupt and Richard Baraniuk.
\newblock Robust support recovery using sparse compressive sensing matrices.
\newblock In {\em Information Sciences and Systems (CISS), 2011 45th Annual
  Conference on}, pages 1--6. IEEE, 2011.

\bibitem[JLBB13]{JLBB_Robust1bitCS}
Laurent Jacques, Jason~N. Laska, Petros~T. Boufounos, and Richard~G. Baraniuk.
\newblock Robust 1-bit compressive sensing via binary stable embeddings of
  sparse vectors.
\newblock {\em {IEEE} Transactions on Information Theory}, 59(4):2082--2102,
  2013.

\bibitem[KSW16]{knudson2016one}
Karin Knudson, Rayan Saab, and Rachel Ward.
\newblock One-bit compressive sensing with norm estimation.
\newblock {\em IEEE Transactions on Information Theory}, 62(5):2748--2758,
  2016.

\bibitem[LGX16]{liu2016one}
Wenhui Liu, Da~Gong, and Zhiqiang Xu.
\newblock One-bit compressed sensing by greedy algorithms.
\newblock {\em Numerical Mathematics: Theory, Methods and Applications},
  9(02):169--184, 2016.

\bibitem[Maz16]{Mazumdar16}
Arya Mazumdar.
\newblock Nonadaptive group testing with random set of defectives.
\newblock {\em IEEE Transactions on Information Theory}, 62(12):7522--7531, Dec
  2016.

\bibitem[ND00]{HD:00}
Hung~Q. Ngo and Ding-Zhu Du.
\newblock {A Survey on Combinatorial Group Testing Algorithms with Applications
  to {DNA} Library Screening}.
\newblock In {\em DIMACS Series in Discrete Mathematics and Theoretical
  Computer Science}, 2000.

\bibitem[NW94]{nisan1994hardness}
Noam Nisan and Avi Wigderson.
\newblock Hardness vs randomness.
\newblock {\em J. Comput. Syst. Sci.}, 49(2):149--167, 1994.

\bibitem[PV13]{PlanV13}
Yaniv Plan and Roman Vershynin.
\newblock Robust 1-bit compressed sensing and sparse logistic regression: {A}
  convex programming approach.
\newblock {\em {IEEE} Trans. Information Theory}, 59(1):482--494, 2013.

\bibitem[SS16]{shen2016one}
Lixin Shen and Bruce~W Suter.
\newblock One-bit compressive sampling via $\ell_0$ minimization.
\newblock {\em EURASIP Journal on Advances in Signal Processing}, 2016(1):71,
  2016.

\bibitem[Tre01]{trevisan01extractors}
Luca Trevisan.
\newblock Extractors and pseudorandom generators.
\newblock {\em J. {ACM}}, 48(4):860--879, 2001.

\bibitem[Var57]{Varshamov57}
R.~R. Varshamov.
\newblock Estimate of the number of signals in error correcting codes.
\newblock {\em Dokl. Acad. Nauk SSSR}, 117:739--741, 1957.

\bibitem[YYO12]{yan2012robust}
Ming Yan, Yi~Yang, and Stanley Osher.
\newblock Robust 1-bit compressive sensing using adaptive outlier pursuit.
\newblock {\em IEEE Transactions on Signal Processing}, 60(7):3868--3875, 2012.

\bibitem[ZYJ14]{zhang2014efficient}
Lijun Zhang, Jinfeng Yi, and Rong Jin.
\newblock Efficient algorithms for robust one-bit compressive sensing.
\newblock In {\em ICML}, pages 820--828, 2014.

\end{thebibliography}

\appendix

\section{Towards Explicit Constructions of Robust-UFFs} \label{sec:nw_designs}

In this section, we describe some attempts to get explicit constructions of Robust-UFFs using explicit constructions of error correcting codes. 

\begin{proposition}
	If $\calF$ is a $(n,m,d,1,\alpha/k)$-Robust-UFF, then $\calF$ is also a $(n,m,d,k,\alpha)$-Robust-UFF.
\end{proposition}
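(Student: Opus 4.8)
The plan is simply to unwind both definitions and observe that the conclusion follows from the subadditivity of cardinality under unions. Suppose $\calF = \set{B_1, \ldots, B_n}$ is an $(n,m,d,1,\alpha/k)$-Robust-UFF. By definition this means $|B_j| = d$ for every $j \in [n]$, and for every pair of distinct indices $j_0, j_1 \in [n]$ we have $|B_{j_0} \cap B_{j_1}| < \frac{\alpha}{k}|B_{j_0}| = \frac{\alpha d}{k}$. To show that $\calF$ is an $(n,m,d,k,\alpha)$-Robust-UFF, the size constraint $|B_j| = d$ is inherited verbatim, so the only thing to check is that for all distinct $j_0, j_1, \ldots, j_k \in [n]$ we have $|B_{j_0} \cap (B_{j_1} \cup B_{j_2} \cup \cdots \cup B_{j_k})| < \alpha|B_{j_0}|$.

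The key step is to distribute the intersection over the union, writing $B_{j_0} \cap (B_{j_1} \cup \cdots \cup B_{j_k}) = \bigcup_{i=1}^{k} (B_{j_0} \cap B_{j_i})$, and then to apply the union bound on cardinalities: $|B_{j_0} \cap (B_{j_1} \cup \cdots \cup B_{j_k})| \le \sum_{i=1}^{k} |B_{j_0} \cap B_{j_i}|$. Since $j_0, j_1, \ldots, j_k$ are all distinct, each pair $(j_0, j_i)$ with $i \ge 1$ consists of distinct indices, so the $(n,m,d,1,\alpha/k)$-Robust-UFF property bounds each summand strictly by $\frac{\alpha d}{k}$. Summing over the $k$ terms gives a total strictly less than $k \cdot \frac{\alpha d}{k} = \alpha d = \alpha |B_{j_0}|$, which is exactly what is required.

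There is essentially no obstacle here — the argument is a one-line union bound once the definitions are expanded. The only minor points to keep straight are that the pairwise estimate is always applied with $j_0$ as the distinguished element and each $j_i$ ($i \ge 1$) as the other element, which is legitimate precisely because all $k+1$ indices are assumed distinct, and the degenerate case $k = 1$, where the claimed implication is a tautology. Hence $\calF$ is an $(n,m,d,k,\alpha)$-Robust-UFF.
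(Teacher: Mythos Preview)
Your proof is correct and follows essentially the same approach as the paper: both arguments use the pairwise bound $|B_{j_0} \cap B_{j_i}| < \alpha d/k$ and then apply subadditivity of cardinality under unions to sum the $k$ terms and obtain $|B_{j_0} \cap (B_{j_1} \cup \cdots \cup B_{j_k})| < \alpha d$. Your exposition is simply more detailed than the paper's, but the mathematical content is identical.
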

\begin{proof}
	Let $\calF = \set{B_1, \ldots, B_n}$, where $B_i \subseteq [m]$ with $|B_i| = d$ for all $i \in [n]$. From the definition of $(n,m,d,1,\alpha/k)$-Robust-UFF, we have that for any $i \ne j$, it holds that $|B_i \cap B_j| < \alpha d/k$. Thus, we also get that for any $i_0, i_1, \cdots, i_k$, it holds that,
	\[ |B_{i_0} \cap (B_{i_1} \cup \cdots \cup B_{i_k})| \le \sum_{j=1}^{k} |B_{i_0} \cap B_{i_j}| < \alpha d\;. \]
	Thus, $\calF$ is also a $(n,m,d,k,\alpha)$-Robust-UFF.
\end{proof}

\noindent We note that Lemma~\ref{lem:Robust-UFF_exist} implies the existence of $(n,m,d,1,\alpha/k)$-Robust-UFF with roughly the same parameters as that for $(n,m,d,k,\alpha)$-Robust-UFF. So in terms of probabilistic constructions at least, we aren't making our task too difficult by focusing on the $k=1$ case.

Thus, it suffices to construct Robust-UFFs with parameter $k = 1$, which are well studied under the name of  Nisan-Wigderson designs \cite{nisan1994hardness}. In particular, it is known that $(n,m,d,1,\alpha/k)$-Robust-UFF can be constructed in time $n^{O(k^2)}$ with the parameters of our interest, namely $m = O\inparen{(k^2 \log n)/\alpha^2}$ and $d = O\inparen{(k \log n)/\alpha}$ (cf. \cite{trevisan01extractors}). We now describe how explicit constructions of error correcting codes, immediately give rise to explicit constructions of Nisan-Wigderson designs.

\begin{defn}
	$\calC \subseteq [q]^d$ is an error correcting code with (relative) distance $(1-\delta)$ and rate $r$ if
	\begin{itemize}
	\item for any $c_1 \ne c_2 \in \calC$, the distance $\Delta(c_1, c_2) \ge (1-\delta)d$,\\ where $\Delta(c_1, c_2) \defeq \#\setdef{i \in [d]}{c_1(i) \ne c_2(i)}$.
	\item $r = \frac{\log |\calC|}{d \log q}$, or equivalently, $|\calC| = q^{rd}$.
	\end{itemize}
\end{defn}

\begin{proposition}[Robust-UFFs from error correcting codes]
	Given an error correcting code $\calC \subseteq [q]^d$ with distance $(1-\delta)$ and rate $r$, it is possible to construct a $(n,m,d,1,\delta)$-Robust-UFF, with parameters, $n = |\calC| = q^{rd}$ and $m = q \cdot d = \frac{q \log n}{r \log q}$.
\end{proposition}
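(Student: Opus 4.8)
The plan is to use the standard ``graph of a function'' correspondence between codewords and subsets, which is precisely the Nisan--Wigderson design construction. Given the error correcting code $\calC \subseteq [q]^d$, identify each codeword $c \in \calC$ with the function $c : [d] \to [q]$ sending $i \mapsto c(i)$. Take the ground set to be $[m] = [d] \times [q]$, so that $m = dq$, and for each $c \in \calC$ define $B_c \defeq \setdef{(i, c(i))}{i \in [d]} \subseteq [d] \times [q]$, the graph of $c$. The family is $\calF \defeq \setdef{B_c}{c \in \calC}$, which has $n = |\calC| = q^{rd}$ sets, and by construction $|B_c| = d$ for every $c$, matching the uniformity requirement of Definition~\ref{def:robust-uff}.

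Next I would verify the pairwise intersection bound, which for $k=1$ is the only nontrivial condition a $(n,m,d,1,\delta)$-Robust-UFF asks for. For distinct $c_1 \ne c_2 \in \calC$, an element $(i,a) \in [d]\times[q]$ lies in $B_{c_1} \cap B_{c_2}$ exactly when $a = c_1(i) = c_2(i)$, so
\[ |B_{c_1} \cap B_{c_2}| = \#\setdef{i \in [d]}{c_1(i) = c_2(i)} = d - \Delta(c_1, c_2) \le d - (1-\delta)d = \delta d = \delta |B_{c_1}|. \]
This is the Robust-UFF condition with parameter $\alpha = \delta$; if one insists on the strict inequality in Definition~\ref{def:robust-uff}, it suffices to note that one may take a code whose distance is strictly larger than $(1-\delta)d$, or to absorb the difference into the constant. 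Finally, the stated parameter relation is pure bookkeeping: $m = dq$ and $n = q^{rd}$ give $\log n = rd\log q$, hence $d = \frac{\log n}{r\log q}$ and $m = dq = \frac{q\log n}{r\log q}$.

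Since the map $c \mapsto B_c$ is trivially computable, this construction is explicit (runs in time polynomial in $n$) whenever $\calC$ has an efficient encoder, so plugging in known explicit codes yields explicit Robust-UFFs for $k=1$ (and hence, via the preceding Proposition, for general $k$). There is essentially no technical obstacle in the argument itself — the entire content lies in the choice of encoding — so the ``hard part'' is not this proposition but the subsequent, still-open, question of whether the rate--distance tradeoffs achievable by \emph{explicit} codes are good enough to push $m = \frac{q\log n}{r\log q}$ down to the probabilistic optimum $m = O\inparen{\frac{k^2\log n}{\alpha^2}}$ of Lemma~\ref{lem:Robust-UFF_exist}, which is exactly the gap left in Open Problem~\ref{open:explicit_constructions}.
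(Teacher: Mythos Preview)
Your proof is correct and is essentially identical to the paper's: both take the ground set $[d]\times[q]$, associate to each codeword $c$ the ``graph'' set $\setdef{(i,c(i))}{i\in[d]}$, and bound pairwise intersections by $d-\Delta(c_1,c_2)\le\delta d$ using the code's distance. Your additional bookkeeping for $m=\frac{q\log n}{r\log q}$ and the remark on the strict-versus-nonstrict inequality are fine elaborations but not needed beyond what the paper does.
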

\begin{proof}
	We construct a $(n,m,d,1,\delta)$-Robust-UFF $\calF$ from such an error correcting code $\calC$ as follows. Consider a universe $[d] \times [q]$. For every codeword $c \in \calC$, we include the subset $S_c = \setdef{(i,c(i))}{i \in [d]}$. It is easy to see that $n = |\calC| = q^{rd}$ and $m = q \cdot d$. The only thing to verify is that for $c_1 \ne c_2 \in \calC$, it holds that $|S_{c_1} \cap S_{c_2}| \le \delta d$. This holds because $|S_{c_1} \cap S_{c_2}| = \#\setdef{(i,\sigma)}{i \in d, \ c_1(i) = c_2(i) = \sigma} = d - \Delta(c_1, c_2) \le \delta d$.
\end{proof}

Unfortunately, this approach of using error correcting codes falls shy of achieving the parameters we want. For example, it is known that Algebraic-Geometry codes of distance $(1-\Theta(1/\sqrt{q}))$ with rate $\Theta(1/q)$ exist. Setting $q = O(k/\alpha)$, we get a $(n,m,d,1,\alpha/k)$-Robust-UFF with parameter $m = \frac{q \log n}{r \log q} = \wtilde{O}\inparen{\frac{k^3 \log n}{\alpha^3}}$. On the other hand, Reed-Solomon codes with distance $(1-\delta)$ and rate $\delta$ exist (by setting $d = q \ge 1/\delta$). Setting $\delta = \alpha/k$ and $q$ s.t. $n = q^{\delta q}$, we get a $(n,m,d,1,\alpha/k)$-Robust-UFF with parameter $m = \frac{q \log n}{r \log q} = \wtilde{O}\inparen{\frac{k^2 \log^2 n}{\alpha^2}}$.

In order to obtain $m = O\inparen{\frac{k^2 \log n}{\alpha^2}}$, we would require an error correcting code with distance $(1-\delta)$ and rate $r \ge (\delta^2 q / \log q)$ (where $\delta = \alpha/k$). We are not aware of even probabilistic constructions of error correcting codes which satisfy these constraints on the parameters.

\end{document}